\newtheorem{theorem}{Theorem}[section]
\newtheorem*{ack}{Acknowledgement}
\newtheorem{axiom}[theorem]{Axiom}
\newtheorem{condition}[theorem]{Condition}
\newtheorem{conjecture}[theorem]{Conjecture}
\newtheorem{corollary}[theorem]{Corollary}
\newtheorem{definition}[theorem]{Definition}
\newtheorem{example}[theorem]{Example}
\newtheorem{exercise}[theorem]{Exercise}
\newtheorem{lemma}[theorem]{Lemma}
\newtheorem{proposition}[theorem]{Proposition}
\newtheorem{remark}[theorem]{Remark}
\let\pdfoutput=\undefined\fi
\chardef\@x10\chardef\@xv60
\def\tcitime{
\def\@time{%
  \@minute\time\@hour\@minute\divide\@hour\@xv
  \ifnum\@hour<\@x 0\fi\the\@hour:%
  \multiply\@hour\@xv\advance\@minute-\@hour
  \ifnum\@minute<\@x 0\fi\the\@minute
  }}%
\def\x@hyperref#1#2#3{%
   % Turn off various catcodes before reading parameter 4
   \catcode`\~ = 12
   \catcode`\$ = 12
   \catcode`\_ = 12
   \catcode`\# = 12
   \catcode`\& = 12
   \catcode`\% = 12
   \y@hyperref{#1}{#2}{#3}%
}
\def\y@hyperref#1#2#3#4{%
   #2\ref{#4}#3
   \catcode`\~ = 13
   \catcode`\$ = 3
   \catcode`\_ = 8
   \catcode`\# = 6
   \catcode`\& = 4
   \catcode`\% = 14
}
\def\QCTOpt[#1]#2{%
  \def\QCTOptB{#1}
  \def\QCTOptA{#2}
}
\def\QCTNOpt#1{%
  \def\QCTOptA{#1}
  \let\QCTOptB\empty
}
\def\Qct{%
  \@ifnextchar[{%
    \QCTOpt}{\QCTNOpt}
}
\def\QCBOpt[#1]#2{%
  \def\QCBOptB{#1}%
  \def\QCBOptA{#2}%
}
\def\QCBNOpt#1{%
  \def\QCBOptA{#1}%
  \let\QCBOptB\empty
}
\def\Qcb{%
  \@ifnextchar[{%
    \QCBOpt}{\QCBNOpt}%
}
\def\PrepCapArgs{%
  \ifx\QCBOptA\empty
    \ifx\QCTOptA\empty
      {}%
    \else
      \ifx\QCTOptB\empty
        {\QCTOptA}%
      \else
        [\QCTOptB]{\QCTOptA}%
      \fi
    \fi
  \else
    \ifx\QCBOptA\empty
      {}%
    \else
      \ifx\QCBOptB\empty
        {\QCBOptA}%
      \else
        [\QCBOptB]{\QCBOptA}%
      \fi
    \fi
  \fi
}
\def\GRAPHICSPS#1{%
 \ifcase\GRAPHICSTYPE%\GRAPHICSTYPE=0
   \special{ps: #1}%
 \or%\GRAPHICSTYPE=1
   \special{language "PS", include "#1"}%
%%%\or%\GRAPHICSTYPE=2
%%%  #1%
 \fi
}%
\def\graffile#1#2#3#4{%
%%% \ifnum\GRAPHICSTYPE=\tw@
%%%  %Following if using psfig
%%%  \@ifundefined{psfig}{\input psfig.tex}{}%
%%%  \psfig{file=#1, height=#3, width=#2}%
%%% \else
  %Following for all others
  % JCS - added BOXTHEFRAME, see below
    \bgroup
	   \@inlabelfalse
       \leavevmode
       \@ifundefined{bbl@deactivate}{\def~{\string~}}{\activesoff}%
        \raise -#4 \BOXTHEFRAME{%
           \hbox to #2{\raise #3\hbox to #2{\null #1\hfil}}}%
    \egroup
}%
\def\draftbox#1#2#3#4{%
 \leavevmode\raise -#4 \hbox{%
  \frame{\rlap{\protect\tiny #1}\hbox to #2%
   {\vrule height#3 width\z@ depth\z@\hfil}%
  }%
 }%
}%
\let\nographics=\@msidraft
\newif\ifwasdraft
\def\GRAPHIC#1#2#3#4#5{%
   \ifnum\@msidraft=\@ne\draftbox{#2}{#3}{#4}{#5}%
   \else\graffile{#1}{#3}{#4}{#5}%
   \fi
}
\def\addtoLaTeXparams#1{%
    \edef\LaTeXparams{\LaTeXparams #1}}%
\newif\ifBoxFrame \BoxFramefalse
\newif\ifOverFrame \OverFramefalse
\newif\ifUnderFrame \UnderFramefalse
\def\BOXTHEFRAME#1{%
   \hbox{%
      \ifBoxFrame
         \frame{#1}%
      \else
         {#1}%
      \fi
   }%
}
\def\doFRAMEparams#1{\BoxFramefalse\OverFramefalse\UnderFramefalse\readFRAMEparams#1\end}%
\def\readFRAMEparams#1{%
 \ifx#1\end%
  \let\next=\relax
  \else
  \ifx#1i\dispkind=\z@\fi
  \ifx#1d\dispkind=\@ne\fi
  \ifx#1f\dispkind=\tw@\fi
  \ifx#1t\addtoLaTeXparams{t}\fi
  \ifx#1b\addtoLaTeXparams{b}\fi
  \ifx#1p\addtoLaTeXparams{p}\fi
  \ifx#1h\addtoLaTeXparams{h}\fi
  \ifx#1X\BoxFrametrue\fi
  \ifx#1O\OverFrametrue\fi
  \ifx#1U\UnderFrametrue\fi
  \ifx#1w
    \ifnum\@msidraft=1\wasdrafttrue\else\wasdraftfalse\fi
    \@msidraft=\@ne
  \fi
  \let\next=\readFRAMEparams
  \fi
 \next
 }%
\def\IFRAME#1#2#3#4#5#6{%
      \bgroup
      \let\QCTOptA\empty
      \let\QCTOptB\empty
      \let\QCBOptA\empty
      \let\QCBOptB\empty
      #6%
      \parindent=0pt
      \leftskip=0pt
      \rightskip=0pt
      \setbox0=\hbox{\QCBOptA}%
      \@tempdima=#1\relax
      \ifOverFrame
          % Do this later
          \typeout{This is not implemented yet}%
          \show\HELP
      \else
         \ifdim\wd0>\@tempdima
            \advance\@tempdima by \@tempdima
            \ifdim\wd0 >\@tempdima
               \setbox1 =\vbox{%
                  \unskip\hbox to \@tempdima{\hfill\GRAPHIC{#5}{#4}{#1}{#2}{#3}\hfill}%
                  \unskip\hbox to \@tempdima{\parbox[b]{\@tempdima}{\QCBOptA}}%
               }%
               \wd1=\@tempdima
            \else
               \textwidth=\wd0
               \setbox1 =\vbox{%
                 \noindent\hbox to \wd0{\hfill\GRAPHIC{#5}{#4}{#1}{#2}{#3}\hfill}\\%
                 \noindent\hbox{\QCBOptA}%
               }%
               \wd1=\wd0
            \fi
         \else
            \ifdim\wd0>0pt
              \hsize=\@tempdima
              \setbox1=\vbox{%
                \unskip\GRAPHIC{#5}{#4}{#1}{#2}{0pt}%
                \break
                \unskip\hbox to \@tempdima{\hfill \QCBOptA\hfill}%
              }%
              \wd1=\@tempdima
           \else
              \hsize=\@tempdima
              \setbox1=\vbox{%
                \unskip\GRAPHIC{#5}{#4}{#1}{#2}{0pt}%
              }%
              \wd1=\@tempdima
           \fi
         \fi
         \@tempdimb=\ht1
         %\advance\@tempdimb by \dp1
         \advance\@tempdimb by -#2
         \advance\@tempdimb by #3
         \leavevmode
         \raise -\@tempdimb \hbox{\box1}%
      \fi
      \egroup%
}%
\def\DFRAME#1#2#3#4#5{%
  \vspace\topsep
  \hfil\break
  \bgroup
     \leftskip\@flushglue
	 \rightskip\@flushglue
	 \parindent\z@
	 \parfillskip\z@skip
     \let\QCTOptA\empty
     \let\QCTOptB\empty
     \let\QCBOptA\empty
     \let\QCBOptB\empty
	 \vbox\bgroup
        \ifOverFrame 
           #5\QCTOptA\par
        \fi
        \GRAPHIC{#4}{#3}{#1}{#2}{\z@}%
        \ifUnderFrame 
           \break#5\QCBOptA
        \fi
	 \egroup
  \egroup
  \vspace\topsep
  \break
}%
\def\FFRAME#1#2#3#4#5#6#7{%
 %If float.sty loaded and float option is 'h', change to 'H'  (gp) 1998/09/05
  \@ifundefined{floatstyle}
    {%floatstyle undefined (and float.sty not present), no change
     \begin{figure}[#1]%
    }
    {%floatstyle DEFINED
	 \ifx#1h%Only the h parameter, change to H
      \begin{figure}[H]%
	 \else
      \begin{figure}[#1]%
	 \fi
	}
  \let\QCTOptA\empty
  \let\QCTOptB\empty
  \let\QCBOptA\empty
  \let\QCBOptB\empty
  \ifOverFrame
    #4
    \ifx\QCTOptA\empty
    \else
      \ifx\QCTOptB\empty
        \caption{\QCTOptA}%
      \else
        \caption[\QCTOptB]{\QCTOptA}%
      \fi
    \fi
    \ifUnderFrame\else
      \label{#5}%
    \fi
  \else
    \UnderFrametrue%
  \fi
  \begin{center}\GRAPHIC{#7}{#6}{#2}{#3}{\z@}\end{center}%
  \ifUnderFrame
    #4
    \ifx\QCBOptA\empty
      \caption{}%
    \else
      \ifx\QCBOptB\empty
        \caption{\QCBOptA}%
      \else
        \caption[\QCBOptB]{\QCBOptA}%
      \fi
    \fi
    \label{#5}%
  \fi
  \end{figure}%
 }%
\def\makeactives{
  \catcode`\"=\active
  \catcode`\;=\active
  \catcode`\:=\active
  \catcode`\'=\active
  \catcode`\~=\active
}
   \gdef\activesoff{%
      \def"{\string"}%
      \def;{\string;}%
      \def:{\string:}%
      \def'{\string'}%
      \def~{\string~}%
      %\bbl@deactivate{"}%
      %\bbl@deactivate{;}%
      %\bbl@deactivate{:}%
      %\bbl@deactivate{'}%
    }
\def\FRAME#1#2#3#4#5#6#7#8{%
 \bgroup
 \ifnum\@msidraft=\@ne
   \wasdrafttrue
 \else
   \wasdraftfalse%
 \fi
 \def\LaTeXparams{}%
 \dispkind=\z@
 \def\LaTeXparams{}%
 \doFRAMEparams{#1}%
 \ifnum\dispkind=\z@\IFRAME{#2}{#3}{#4}{#7}{#8}{#5}\else
  \ifnum\dispkind=\@ne\DFRAME{#2}{#3}{#7}{#8}{#5}\else
   \ifnum\dispkind=\tw@
    \edef\@tempa{\noexpand\FFRAME{\LaTeXparams}}%
    \@tempa{#2}{#3}{#5}{#6}{#7}{#8}%
    \fi
   \fi
  \fi
  \ifwasdraft\@msidraft=1\else\@msidraft=0\fi{}%
  \egroup
 }%
\def\TEXUX#1{"texux"}
\def\limfunc#1{\mathop{\rm #1}}%
\def\func#1{\mathop{\rm #1}\nolimits}%
\long\def\QQQ#1#2{%
     \long\expandafter\def\csname#1\endcsname{#2}}%
\long\def\QQA#1#2{}%
\def\QTR#1#2{{\csname#1\endcsname {#2}}}%
\def\EXPAND#1[#2]#3{}%
\def\NOEXPAND#1[#2]#3{}%
\def\LaTeXparent#1{}%
\def\ChildStyles#1{}%
\def\ChildDefaults#1{}%
\def\QTagDef#1#2#3{}%
  \providecommand{\UNICODE}[2][]{\protect\rule{.1in}{.1in}}
  \providecommand{\U}[1]{\protect\rule{.1in}{.1in}}
\def\QQfnmark#1{\footnotemark}
 \def\abstract{%
  \if@twocolumn
   \section*{Abstract (Not appropriate in this style!)}%
   \else \small 
   \begin{center}{\bf Abstract\vspace{-.5em}\vspace{\z@}}\end{center}%
   \quotation 
   \fi
  }%
   \def\registered{\relax\ifmmode{}\r@gistered
                    \else$\m@th\r@gistered$\fi}%
 \def\r@gistered{^{\ooalign
  {\hfil\raise.07ex\hbox{$\scriptstyle\rm\text{R}$}\hfil\crcr
  \mathhexbox20D}}}}{}%
\def\TEXTsymbol#1{\mbox{$#1$}}%
\newdimen\theight
\def\newfmtname{LaTeX2e}
  \DeclareOldFontCommand{\rm}{\normalfont\rmfamily}{\mathrm}
  \DeclareOldFontCommand{\sf}{\normalfont\sffamily}{\mathsf}
  \DeclareOldFontCommand{\tt}{\normalfont\ttfamily}{\mathtt}
  \DeclareOldFontCommand{\bf}{\normalfont\bfseries}{\mathbf}
  \DeclareOldFontCommand{\it}{\normalfont\itshape}{\mathit}
  \DeclareOldFontCommand{\sl}{\normalfont\slshape}{\@nomath\sl}
  \DeclareOldFontCommand{\sc}{\normalfont\scshape}{\@nomath\sc}
\def\alpha{{\Greekmath 010B}}%
\def\beta{{\Greekmath 010C}}%
\def\gamma{{\Greekmath 010D}}%
\def\delta{{\Greekmath 010E}}%
\def\epsilon{{\Greekmath 010F}}%
\def\zeta{{\Greekmath 0110}}%
\def\eta{{\Greekmath 0111}}%
\def\theta{{\Greekmath 0112}}%
\def\iota{{\Greekmath 0113}}%
\def\kappa{{\Greekmath 0114}}%
\def\lambda{{\Greekmath 0115}}%
\def\mu{{\Greekmath 0116}}%
\def\nu{{\Greekmath 0117}}%
\def\xi{{\Greekmath 0118}}%
\def\pi{{\Greekmath 0119}}%
\def\rho{{\Greekmath 011A}}%
\def\sigma{{\Greekmath 011B}}%
\def\tau{{\Greekmath 011C}}%
\def\upsilon{{\Greekmath 011D}}%
\def\phi{{\Greekmath 011E}}%
\def\chi{{\Greekmath 011F}}%
\def\psi{{\Greekmath 0120}}%
\def\omega{{\Greekmath 0121}}%
\def\varepsilon{{\Greekmath 0122}}%
\def\vartheta{{\Greekmath 0123}}%
\def\varpi{{\Greekmath 0124}}%
\def\varrho{{\Greekmath 0125}}%
\def\varsigma{{\Greekmath 0126}}%
\def\varphi{{\Greekmath 0127}}%
\def\nabla{{\Greekmath 0272}}
\def\FindBoldGroup{%
   {\setbox0=\hbox{$\mathbf{x\global\edef\theboldgroup{\the\mathgroup}}$}}%
}
\def\Greekmath#1#2#3#4{%
    \if@compatibility
        \ifnum\mathgroup=\symbold
           \mathchoice{\mbox{\boldmath$\displaystyle\mathchar"#1#2#3#4$}}%
                      {\mbox{\boldmath$\textstyle\mathchar"#1#2#3#4$}}%
                      {\mbox{\boldmath$\scriptstyle\mathchar"#1#2#3#4$}}%
                      {\mbox{\boldmath$\scriptscriptstyle\mathchar"#1#2#3#4$}}%
        \else
           \mathchar"#1#2#3#4% 
        \fi 
    \else 
        \FindBoldGroup
        \ifnum\mathgroup=\theboldgroup % For 2e
           \mathchoice{\mbox{\boldmath$\displaystyle\mathchar"#1#2#3#4$}}%
                      {\mbox{\boldmath$\textstyle\mathchar"#1#2#3#4$}}%
                      {\mbox{\boldmath$\scriptstyle\mathchar"#1#2#3#4$}}%
                      {\mbox{\boldmath$\scriptscriptstyle\mathchar"#1#2#3#4$}}%
        \else
           \mathchar"#1#2#3#4% 
        \fi     	    
	  \fi}
\newif\ifGreekBold  \GreekBoldfalse
\let\SAVEPBF=\pbf
\def\pbf{\GreekBoldtrue\SAVEPBF}%
  \newcounter{equationnumber}  
  \def\mathletters{%
     \addtocounter{equation}{1}
     \edef\@currentlabel{\theequation}%
     \setcounter{equationnumber}{\c@equation}
     \setcounter{equation}{0}%
     \edef\theequation{\@currentlabel\noexpand\alph{equation}}%
  }
    \def\BibTeX{{\rm B\kern-.05em{\sc i\kern-.025em b}\kern-.08em
                 T\kern-.1667em\lower.7ex\hbox{E}\kern-.125emX}}}{}%
\def\AmS{{\protect\usefont{OMS}{cmsy}{m}{n}%
                A\kern-.1667em\lower.5ex\hbox{M}\kern-.125emS}}}{}%
\def\@@eqncr{\let\@tempa\relax
    \ifcase\@eqcnt \def\@tempa{& & &}\or \def\@tempa{& &}%
      \else \def\@tempa{&}\fi
     \@tempa
     \if@eqnsw
        \iftag@
           \@taggnum
        \else
           \@eqnnum\stepcounter{equation}%
        \fi
     \fi
     \global\tag@false
     \global\@eqnswtrue
     \global\@eqcnt\z@\cr}
\def\TCItag{\@ifnextchar*{\@TCItagstar}{\@TCItag}}
\def\@TCItag#1{%
    \global\tag@true
    \global\def\@taggnum{(#1)}%
    \global\def\@currentlabel{#1}}
\def\@TCItagstar*#1{%
    \global\tag@true
    \global\def\@taggnum{#1}%
    \global\def\@currentlabel{#1}}
\def\tint{\msi@int\textstyle\int}%
\def\tiint{\msi@int\textstyle\iint}%
\def\tiiint{\msi@int\textstyle\iiint}%
\def\tiiiint{\msi@int\textstyle\iiiint}%
\def\tidotsint{\msi@int\textstyle\idotsint}%
\def\toint{\msi@int\textstyle\oint}%
\newtoks\temptoksa
\newtoks\temptoksb
\newtoks\temptoksc
\def\msi@int#1#2{%
 \def\@temp{{#1#2\the\temptoksc_{\the\temptoksa}^{\the\temptoksb}}}%   
 \futurelet\@nextcs
 \@int
}
\def\@int{%
   \ifx\@nextcs\limits
      \typeout{Found limits}%
      \temptoksc={\limits}%
	  \let\@next\@intgobble%
   \else\ifx\@nextcs\nolimits
      \typeout{Found nolimits}%
      \temptoksc={\nolimits}%
	  \let\@next\@intgobble%
   \else
      \typeout{Did not find limits or no limits}%
      \temptoksc={}%
      \let\@next\msi@limits%
   \fi\fi
   \@next   
}%
\def\@intgobble#1{%
   \typeout{arg is #1}%
   \msi@limits
}
\def\msi@limits{%
   \temptoksa={}%
   \temptoksb={}%
   \@ifnextchar_{\@limitsa}{\@limitsb}%
}
\def\@limitsa_#1{%
   \temptoksa={#1}%
   \@ifnextchar^{\@limitsc}{\@temp}%
}
\def\@limitsb{%
   \@ifnextchar^{\@limitsc}{\@temp}%
}
\def\@limitsc^#1{%
   \temptoksb={#1}%
   \@ifnextchar_{\@limitsd}{\@temp}%   
}
\def\@limitsd_#1{%
   \temptoksa={#1}%
   \@temp
}
\def\dint{\msi@int\displaystyle\int}%
\def\diint{\msi@int\displaystyle\iint}%
\def\diiint{\msi@int\displaystyle\iiint}%
\def\diiiint{\msi@int\displaystyle\iiiint}%
\def\didotsint{\msi@int\displaystyle\idotsint}%
\def\doint{\msi@int\displaystyle\oint}%
\def\ExitTCILatex{\makeatother }
\if@compatibility\message{amsmath already loaded}\fi\aftergroup\ExitTCILatex}
\if@compatibility\message{amstex already loaded}\fi\aftergroup\ExitTCILatex}
\if@compatibility\message{amsgen already loaded}\fi\aftergroup\ExitTCILatex}
\let\DOTSI\relax
\def\RIfM@{\relax\ifmmode}%
\def\FN@{\futurelet\next}%
\def\iint{\DOTSI\intno@\tw@\FN@\ints@}%
\def\iiint{\DOTSI\intno@\thr@@\FN@\ints@}%
\def\iiiint{\DOTSI\intno@4 \FN@\ints@}%
\def\idotsint{\DOTSI\intno@\z@\FN@\ints@}%
\def\ints@{\findlimits@\ints@@}%
\newif\iflimtoken@
\newif\iflimits@
\def\findlimits@{\limtoken@true\ifx\next\limits\limits@true
 \else\ifx\next\nolimits\limits@false\else
 \limtoken@false\ifx\ilimits@\nolimits\limits@false\else
 \ifinner\limits@false\else\limits@true\fi\fi\fi\fi}%
\def\multint@{\int\ifnum\intno@=\z@\intdots@                          %1
 \else\intkern@\fi                                                    %2
 \ifnum\intno@>\tw@\int\intkern@\fi                                   %3
 \ifnum\intno@>\thr@@\int\intkern@\fi                                 %4
 \int}%                                                               %5
\def\multintlimits@{\intop\ifnum\intno@=\z@\intdots@\else\intkern@\fi
 \ifnum\intno@>\tw@\intop\intkern@\fi
 \ifnum\intno@>\thr@@\intop\intkern@\fi\intop}%
\def\intic@{%
    \mathchoice{\hskip.5em}{\hskip.4em}{\hskip.4em}{\hskip.4em}}%
\def\negintic@{\mathchoice
 {\hskip-.5em}{\hskip-.4em}{\hskip-.4em}{\hskip-.4em}}%
\def\ints@@{\iflimtoken@                                              %1
 \def\ints@@@{\iflimits@\negintic@
   \mathop{\intic@\multintlimits@}\limits                             %2
  \else\multint@\nolimits\fi                                          %3
  \eat@}%                                                             %4
 \else                                                                %5
 \def\ints@@@{\iflimits@\negintic@
  \mathop{\intic@\multintlimits@}\limits\else
  \multint@\nolimits\fi}\fi\ints@@@}%
\def\intkern@{\mathchoice{\!\!\!}{\!\!}{\!\!}{\!\!}}%
\def\plaincdots@{\mathinner{\cdotp\cdotp\cdotp}}%
\def\intdots@{\mathchoice{\plaincdots@}%
 {{\cdotp}\mkern1.5mu{\cdotp}\mkern1.5mu{\cdotp}}%
 {{\cdotp}\mkern1mu{\cdotp}\mkern1mu{\cdotp}}%
 {{\cdotp}\mkern1mu{\cdotp}\mkern1mu{\cdotp}}}%
\def\RIfM@{\relax\protect\ifmmode}
\def\text{\RIfM@\expandafter\text@\else\expandafter\mbox\fi}
\let\nfss@text\text
\def\text@#1{\mathchoice
   {\textdef@\displaystyle\f@size{#1}}%
   {\textdef@\textstyle\tf@size{\firstchoice@false #1}}%
   {\textdef@\textstyle\sf@size{\firstchoice@false #1}}%
   {\textdef@\textstyle \ssf@size{\firstchoice@false #1}}%
   \glb@settings}
\def\textdef@#1#2#3{\hbox{{%
                    \everymath{#1}%
                    \let\f@size#2\selectfont
                    #3}}}
\newif\iffirstchoice@
\def\Let@{\relax\iffalse{\fi\let\\=\cr\iffalse}\fi}%
\def\vspace@{\def\vspace##1{\crcr\noalign{\vskip##1\relax}}}%
\def\multilimits@{\bgroup\vspace@\Let@
 \baselineskip\fontdimen10 \scriptfont\tw@
 \advance\baselineskip\fontdimen12 \scriptfont\tw@
 \lineskip\thr@@\fontdimen8 \scriptfont\thr@@
 \lineskiplimit\lineskip
 \vbox\bgroup\ialign\bgroup\hfil$\m@th\scriptstyle{##}$\hfil\crcr}%
\def\Sb{_\multilimits@}%
\def\endSb{\crcr\egroup\egroup\egroup}%
\def\Sp{^\multilimits@}%
\newdimen\ex@
\def\rightarrowfill@#1{$#1\m@th\mathord-\mkern-6mu\cleaders
 \hbox{$#1\mkern-2mu\mathord-\mkern-2mu$}\hfill
 \mkern-6mu\mathord\rightarrow$}%
\def\leftarrowfill@#1{$#1\m@th\mathord\leftarrow\mkern-6mu\cleaders
 \hbox{$#1\mkern-2mu\mathord-\mkern-2mu$}\hfill\mkern-6mu\mathord-$}%
\def\leftrightarrowfill@#1{$#1\m@th\mathord\leftarrow
\mkern-6mu\cleaders
 \hbox{$#1\mkern-2mu\mathord-\mkern-2mu$}\hfill
 \mkern-6mu\mathord\rightarrow$}%
\def\overrightarrow{\mathpalette\overrightarrow@}%
\def\overrightarrow@#1#2{\vbox{\ialign{##\crcr\rightarrowfill@#1\crcr
 \noalign{\kern-\ex@\nointerlineskip}$\m@th\hfil#1#2\hfil$\crcr}}}%
\def\overleftarrow{\mathpalette\overleftarrow@}%
\def\overleftarrow@#1#2{\vbox{\ialign{##\crcr\leftarrowfill@#1\crcr
 \noalign{\kern-\ex@\nointerlineskip}$\m@th\hfil#1#2\hfil$\crcr}}}%
\def\overleftrightarrow{\mathpalette\overleftrightarrow@}%
\def\overleftrightarrow@#1#2{\vbox{\ialign{##\crcr
   \leftrightarrowfill@#1\crcr
 \noalign{\kern-\ex@\nointerlineskip}$\m@th\hfil#1#2\hfil$\crcr}}}%
\def\underrightarrow{\mathpalette\underrightarrow@}%
\def\underrightarrow@#1#2{\vtop{\ialign{##\crcr$\m@th\hfil#1#2\hfil
  $\crcr\noalign{\nointerlineskip}\rightarrowfill@#1\crcr}}}%
\def\underleftarrow{\mathpalette\underleftarrow@}%
\def\underleftarrow@#1#2{\vtop{\ialign{##\crcr$\m@th\hfil#1#2\hfil
  $\crcr\noalign{\nointerlineskip}\leftarrowfill@#1\crcr}}}%
\def\underleftrightarrow{\mathpalette\underleftrightarrow@}%
\def\underleftrightarrow@#1#2{\vtop{\ialign{##\crcr$\m@th
  \hfil#1#2\hfil$\crcr
 \noalign{\nointerlineskip}\leftrightarrowfill@#1\crcr}}}%
\def\qopnamewl@#1{\mathop{\operator@font#1}\nlimits@}
\let\nlimits@\displaylimits
\def\setboxz@h{\setbox\z@\hbox}
\def\varlim@#1#2{\mathop{\vtop{\ialign{##\crcr
 \hfil$#1\m@th\operator@font lim$\hfil\crcr
 \noalign{\nointerlineskip}#2#1\crcr
 \noalign{\nointerlineskip\kern-\ex@}\crcr}}}}
 \def\rightarrowfill@#1{\m@th\setboxz@h{$#1-$}\ht\z@\z@
  $#1\copy\z@\mkern-6mu\cleaders
  \hbox{$#1\mkern-2mu\box\z@\mkern-2mu$}\hfill
  \mkern-6mu\mathord\rightarrow$}
\def\leftarrowfill@#1{\m@th\setboxz@h{$#1-$}\ht\z@\z@
  $#1\mathord\leftarrow\mkern-6mu\cleaders
  \hbox{$#1\mkern-2mu\copy\z@\mkern-2mu$}\hfill
  \mkern-6mu\box\z@$}
\def\projlim{\qopnamewl@{proj\,lim}}
\def\injlim{\qopnamewl@{inj\,lim}}
\def\varinjlim{\mathpalette\varlim@\rightarrowfill@}
\def\varprojlim{\mathpalette\varlim@\leftarrowfill@}
\def\varliminf{\mathpalette\varliminf@{}}
\def\varliminf@#1{\mathop{\underline{\vrule\@depth.2\ex@\@width\z@
   \hbox{$#1\m@th\operator@font lim$}}}}
\def\varlimsup{\mathpalette\varlimsup@{}}
\def\varlimsup@#1{\mathop{\overline
  {\hbox{$#1\m@th\operator@font lim$}}}}
\def\align{\@verbatim \frenchspacing\@vobeyspaces \@alignverbatim
You are using the "align" environment in a style in which it is not defined.}
\let\csname endalign*\endcsname =\endtrivlist
\def\alignat{\@verbatim \frenchspacing\@vobeyspaces \@alignatverbatim
You are using the "alignat" environment in a style in which it is not defined.}
\let\csname endalignat*\endcsname =\endtrivlist
\def\xalignat{\@verbatim \frenchspacing\@vobeyspaces \@xalignatverbatim
You are using the "xalignat" environment in a style in which it is not defined.}
\let\csname endxalignat*\endcsname =\endtrivlist
\def\gather{\@verbatim \frenchspacing\@vobeyspaces \@gatherverbatim
You are using the "gather" environment in a style in which it is not defined.}
\let\csname endgather*\endcsname =\endtrivlist
\def\multiline{\@verbatim \frenchspacing\@vobeyspaces \@multilineverbatim
You are using the "multiline" environment in a style in which it is not defined.}
\let\csname endmultiline*\endcsname =\endtrivlist
\def\arrax{\@verbatim \frenchspacing\@vobeyspaces \@arraxverbatim
You are using a type of "array" construct that is only allowed in AmS-LaTeX.}
\def\tabulax{\@verbatim \frenchspacing\@vobeyspaces \@tabulaxverbatim
You are using a type of "tabular" construct that is only allowed in AmS-LaTeX.}
\let\csname endarrax*\endcsname =\endtrivlist
\let\csname endtabulax*\endcsname =\endtrivlist
 \def\endequation{%
     \ifmmode\ifinner % FLEQN hack
      \iftag@
        \addtocounter{equation}{-1} % undo the increment made in the begin part
        $\hfil
           \displaywidth\linewidth\@taggnum\egroup \endtrivlist
        \global\tag@false
        \global\@ignoretrue   
      \else
        $\hfil
           \displaywidth\linewidth\@eqnnum\egroup \endtrivlist
        \global\tag@false
        \global\@ignoretrue 
      \fi
     \else   
      \iftag@
        \addtocounter{equation}{-1} % undo the increment made in the begin part
        \eqno \hbox{\@taggnum}
        \global\tag@false%
        $$\global\@ignoretrue
      \else
        \eqno \hbox{\@eqnnum}% $$ BRACE MATCHING HACK
        $$\global\@ignoretrue
      \fi
     \fi\fi
 } 
 \newif\iftag@ \tag@false
 \def\TCItag{\@ifnextchar*{\@TCItagstar}{\@TCItag}}
 \def\@TCItag#1{%
     \global\tag@true
     \global\def\@taggnum{(#1)}%
     \global\def\@currentlabel{#1}}
 \def\@TCItagstar*#1{%
     \global\tag@true
     \global\def\@taggnum{#1}%
     \global\def\@currentlabel{#1}}
     \def\tag{\@ifnextchar*{\@tagstar}{\@tag}}
     \def\@tag#1{%
         \global\tag@true
         \global\def\@taggnum{(#1)}}
     \def\@tagstar*#1{%
         \global\tag@true
         \global\def\@taggnum{#1}}
\begin{document}

\title{Distributed-order fractional wave equation on a finite domain. Stress
relaxation in a rod}
\author{Teodor M. Atanackovic
\begin{footnote}
{Department of Mechanics, Faculty of Technical Sciences,
University of Novi Sad, Trg D. Obradovica, 6, 21000 Novi Sad,
Serbia, atanackovic@uns.ac.rs}
\end{footnote}, Stevan Pilipovic
\begin{footnote}
{Department of Mathematics, Faculty of Natural Sciences and
Mathematics, University of Novi Sad, Trg D. Obradovica, 3, 21000
Novi Sad, Serbia, stevan.pilipovic@dmi.uns.ac.rs}
\end{footnote} and Dusan Zorica
\begin{footnote}
{Faculty of Civil Engineering, University of Novi Sad, Kozaracka
2a, 24000 Subotica, Serbia, zorica@gf.uns.ac.rs}
\end{footnote}}
\maketitle

\begin{abstract}
\noindent We study waves in a rod of finite length with a viscoelastic
constitutive equation of fractional distributed-order type for the special
choice of weight functions. Prescribing boundary conditions on displacement,
we obtain case corresponding to stress relaxation. In solving system of
differential and integro-differential equations we use the Laplace
transformation in the time domain.

\bigskip

\noindent \textbf{Keywords:} fractional derivative, distributed-order
fractional derivative, fractional viscoelastic material, distributed-order
wave equation, stress relaxation
\end{abstract}

\label{firstpage}

\section{Introduction}

Fractional derivatives have been used in describing physical phenomena such
as viscoelasticity, diffusion and wave phenomena. There are two approaches
in formulating differential equations with fractional derivatives in physics
and mechanics. In the first approach classical "integer order" differential
equations of a process are modified by introducing fractional derivatives
instead of integer order ones (see books by \cite{Mai1}, \cite{Pod}, and
\cite{TAFDE}). In the second approach one uses variational principles such
as the Hamilton principle as a starting point for deriving equations of a
process, where a modification of the classical case is achieved by replacing
some (or all) integer order derivatives in Lagrangian density by fractional
derivatives of certain kind. Then the resulting Euler-Lagrange equations are
equations of a process and they contain both left and right fractional
derivatives (see papers by \cite{Agr02}, \cite{A-S}, \cite{AKP}).

In this paper we generalize classical wave equation for one-dimensional
elastic body by following the first approach. Recall the classical setting.
Consider the equation of motion
\begin{equation}
\frac{\partial }{\partial x}\sigma \left( x,t\right) =\rho \frac{\partial
^{2}}{\partial t^{2}}u\left( x,t\right) ,\;\;x\in \left[ 0,L\right] ,\;t>0,
\label{1}
\end{equation}%
where $\rho $, $\sigma $ and $u$ denote density, stress and displacement of
a material at a point positioned at $x$ and at a time $t,$ respectively. It
is coupled with the Hooke Law
\begin{equation}
\sigma \left( x,t\right) =E\mathcal{E}\left( x,t\right) ,\;\;x\in \left[ 0,L%
\right] ,\;t>0,  \label{1-1}
\end{equation}%
where $E$ is a modulus of elasticity and $\mathcal{E}$ is a strain measure,
defined by%
\begin{equation}
\mathcal{E}\left( x,t\right) =\frac{\partial }{\partial x}u\left( x,t\right)
,\;\;x\in \left[ 0,L\right] ,\;t>0.  \label{3}
\end{equation}%
Combining (\ref{1}) - (\ref{3}), the classical wave equation is obtained as
\begin{equation*}
\frac{\partial ^{2}}{\partial x^{2}}u\left( x,t\right) =\frac{\rho }{E}\frac{%
\partial ^{2}}{\partial t^{2}}u\left( x,t\right) ,\;\;x\in \left[ 0,L\right]
,\;t>0.
\end{equation*}

We propose the generalization of a constitutive equation (\ref{1-1}) by
replacing it with a constitutive equation which corresponds to a generalized
viscoelastic body:%
\begin{equation}
\int_{0}^{1}\phi _{1}\left( \alpha \right) {}_{0}D_{t}^{\alpha }\sigma
\left( x,t\right) \mathrm{d}\alpha =E\int_{0}^{1}\phi _{2}\left( \alpha
\right) {}_{0}D_{t}^{\alpha }\mathcal{E}\left( x,t\right) \mathrm{d}\alpha
,\;\;x\in \left[ 0,L\right] ,\;t>0,  \label{3-1}
\end{equation}%
where $E$ is a positive constant (having dimension of stress), $\phi _{1}$
and $\phi _{2}$ are given functions or distributions and $%
{}_{0}D_{t}^{\alpha }y$ is the left Riemann-Liouville fractional derivative
of a function $y\in AC\left( \left[ 0,T\right] \right) ,$ for every $T>0,$
of the order $\alpha \in \left[ 0,1\right) ,$ defined as
\begin{equation*}
_{0}D_{t}^{\alpha }y\left( t\right) :=\frac{1}{\Gamma \left( 1-\alpha
\right) }\frac{\mathrm{d}}{\mathrm{d}t}\int_{0}^{t}\frac{y\left( \tau
\right) }{\left( t-\tau \right) ^{\alpha }}\mathrm{d}\tau ,\;\;t>0,
\end{equation*}%
where $\Gamma $ is the Euler gamma function. Recall, $AC\left( \left[ 0,T%
\right] \right) $ denotes the space of absolutely continuous functions (for
a detailed account on fractional calculus see a book \cite{SKM}). In case
when $\phi _{1}$ and $\phi _{2}$ are distributions$,$ we assume that $\phi
_{1}$ and $\phi _{2}$ are compactly supported by $\left[ 0,1\right] $ ($\phi
_{1},\phi _{2}\in \mathcal{E}^{\prime }\left(
%TCIMACRO{\U{211d} }%
%BeginExpansion
\mathbb{R}
%EndExpansion
\right) ,$ $\func{supp}\phi _{1},\func{supp}\phi _{2}\subset \left[ 0,1%
\right] $). In this case integrals in (\ref{3-1}) are defined as%
\begin{equation*}
\left\langle \int_{\func{supp}\phi }\phi \left( \alpha \right)
{}_{0}D_{t}^{\alpha }h\left( t\right) \mathrm{d}\alpha ,\varphi \left(
t\right) \right\rangle :=\left\langle \phi \left( \alpha \right)
,\left\langle {}_{0}D_{t}^{\alpha }h\left( t\right) ,\varphi \left( t\right)
\right\rangle \right\rangle ,\;\;\varphi \in \mathcal{D}\left(
%TCIMACRO{\U{211d} }%
%BeginExpansion
\mathbb{R}
%EndExpansion
\right) .
\end{equation*}%
For details see a work by \citeasnoun{AOP}. Recall, $\mathcal{D}_{+}^{\prime
}\left(
%TCIMACRO{\U{211d} }%
%BeginExpansion
\mathbb{R}
%EndExpansion
\right) $ denotes the space of distributions supported by $\left[ 0,\infty
\right) $ and $\left\langle h\left( t\right) ,\varphi \left( t\right)
\right\rangle $ denotes the action of a distribution $h\in \mathcal{D}%
_{+}^{\prime }\left(
%TCIMACRO{\U{211d} }%
%BeginExpansion
\mathbb{R}
%EndExpansion
\right) $ on a test function $\varphi \in \mathcal{D}\left(
%TCIMACRO{\U{211d} }%
%BeginExpansion
\mathbb{R}
%EndExpansion
\right) $ (see a book by \cite{vlad}).

In (\ref{3-1}), $\phi _{1}$ and $\phi _{2}$ denote constitutive functions or
distributions that are determined experimentally. The constitutive equations
of type (\ref{3-1}) were used earlier in papers by \citeasnoun{H-L}, %
\citeasnoun{a-2002-a}, \citeasnoun{APZ-1} and \citeasnoun{APZ-2}. There are
number of forms that $\phi _{1}$ and $\phi _{2}$ can take (see a paper by
\cite{H-L}, for example). In the sequel we assume that%
\begin{equation}
\phi _{1}\left( \alpha \right) :=a^{\alpha },\;\;\phi _{2}\left( \alpha
\right) :=b^{\alpha },\;\;\alpha \in \left( 0,1\right) ,\;a\leq b.
\label{a-b}
\end{equation}%
The restriction $a\leq b$ follows from the Second Law of Thermodynamics (see
for example papers by \cite{a-2002} and \cite{a-2003}). If $a=b,$ then (\ref%
{3-1}) reduces to the Hooke Law. The choice of $\phi _{1}$ and $\phi _{2}$
in the form (\ref{a-b}) is the simplest choice guaranteeing dimensional
homogeneity. Note that with $\phi _{1}\left( \mu \right) :=\delta \left( \mu
\right) +\tau _{\epsilon }^{\alpha }\delta \left( \mu -\alpha \right) $ and $%
\phi _{2}\left( \mu \right) :=E_{\infty }\tau _{\epsilon }^{\beta }\delta
\left( \mu -\beta \right) $ ($\delta $ denotes the Dirac distribution) we
obtain
\begin{equation}
\sigma +\tau _{\epsilon }^{\alpha }{}_{0}D_{t}^{\alpha }\sigma =E_{\infty
}\tau _{\epsilon }^{\beta }{}_{0}D_{t}^{\beta }\mathcal{\epsilon },
\label{3-2}
\end{equation}%
while with $\phi _{1}\left( \mu \right) :=\delta \left( \mu \right) +\tau
_{\varepsilon }^{\alpha }\delta \left( \mu -\alpha \right) $ and $\phi
_{2}\left( \mu \right) :=E_{0}\big(\delta \left( \mu \right) +\tau _{\sigma
}^{\alpha }\delta \left( \mu -\alpha \right) +\tau _{\sigma }^{\beta }\delta
\left( \mu -\beta \right) \big)$ we obtain%
\begin{equation}
\sigma +\tau _{\varepsilon }^{\alpha }{}_{0}D_{t}^{\alpha }\sigma
=E_{0}\left( 1+\tau _{\sigma }^{\alpha }{}_{0}D_{t}^{\alpha }+\tau _{\sigma
}^{\beta }{}_{0}D_{t}^{\beta }\right) \mathcal{\varepsilon }.  \label{3-3}
\end{equation}%
Recall, system (\ref{1}), (\ref{3}) and (\ref{3-2}), respectively system (%
\ref{1}), (\ref{3}) and (\ref{3-3}), was treated in a work by %
\citeasnoun{R-S}, respectively in a work by \citeasnoun{R-S1}. Also note
that the distributed order dissipation of type (\ref{3-1}) was also used in
the context of one degree of freedom mechanical systems in papers by %
\citeasnoun{a} and \citeasnoun{a1}.

Our aim is to find functions $u$ and $\sigma ,$ locally integrable on $%
%TCIMACRO{\U{211d} }%
%BeginExpansion
\mathbb{R}
%EndExpansion
$ and equal to zero for $t<0,$ so that these functions satisfy (\ref{1}), (%
\ref{3}), (\ref{3-1}), for $x\in \left[ 0,L\right] $ and $t>0,$ as well as
the appropriate initial and boundary conditions. Actually, we will introduce
dimensionless quantities and transform the system (\ref{1}), (\ref{3}), (\ref%
{3-1}), subject to (\ref{IC}) and (\ref{BC-u}), into the system (\ref{sys-bd}%
), subject to (\ref{IC-bd}) and (\ref{BC-u-bd}).

The paper is organized as follows. In \S \ref{cfs} we introduce
dimensionless quantities, proceed by formal calculation and by the use of
the Laplace transformation we obtain solutions to (\ref{1}), (\ref{3}), (\ref%
{3-1}) in the convolution form. We impose initial conditions as well as
boundary conditions to (\ref{1}), (\ref{3}), (\ref{3-1}). Boundary
conditions describe a rod that is fixed at one of its ends, while the other
end is subject to a prescribed displacement $\Upsilon $ (this is the case of
stress relaxation if $\Upsilon =\Upsilon _{0}H,$ with $H$ being the
Heaviside function). Section \ref{P} is devoted to the calculation of the
inverse Laplace transformation, which leads to the explicit form of a
solution. More precisely, we investigate some properties of functions in
order to be able to apply the Cauchy residues theorem, which is used to
calculate the inverse Laplace transformation. We obtain displacement $u$ and
stress $\sigma $ for the boundary condition $\Upsilon =\Upsilon _{0}H$ in \S %
\ref{H}, as well as for $\Upsilon =\Upsilon _{0}H+F,$ where $F$ is an
appropriate function supported by $\left[ 0,\infty \right) ,$ in \S \ref%
{hplusnesto}. We conclude that solutions are locally integrable functions
supported by $\left[ 0,\infty \right) .$ Moreover, they are smooth functions
for $t>0.$ Numerical examples corresponding to stress relaxation are
presented in \S \ref{ne}. Concluding remarks are given in \S \ref{conc}.

\section{Convolution form of solutions. Formal calculation.\label{cfs}}

We prescribe initial conditions for system (\ref{1}), (\ref{3}), (\ref{3-1})
\begin{equation}
u\left( x,0\right) =0,\;\;\;\frac{\partial }{\partial t}u\left( x,0\right)
=0,\;\;\;\sigma \left( x,0\right) =0,\;\;\;\mathcal{E}\left( x,0\right)
=0,\;\;x\in \left[ 0,L\right] .  \label{IC}
\end{equation}%
We subject system (\ref{1}), (\ref{3}), (\ref{3-1}) to boundary conditions
corresponding to case of stress relaxation
\begin{equation}
u\left( 0,t\right) =0,\;\;\;u\left( L,t\right) =\Upsilon \left( t\right)
,\;\;t\in
%TCIMACRO{\U{211d} }%
%BeginExpansion
\mathbb{R}
%EndExpansion
.  \label{BC-u}
\end{equation}%
Function $\Upsilon $ is locally integrable function equal to zero for $t<0.$

Introducing dimensionless quantities%
\begin{equation*}
\bar{x}=\frac{x}{L},\;\bar{t}=\frac{t}{L\sqrt{\frac{\rho }{E}}},\;\bar{u}=%
\frac{u}{L},\;\bar{\sigma}=\frac{\sigma }{E},\;\bar{\Upsilon}=\frac{\Upsilon
}{L},\;\bar{\phi}_{1}=\frac{\phi _{1}}{\left( L\sqrt{\frac{\rho }{E}}\right)
^{\alpha }},\;\bar{\phi}_{2}=\frac{\phi _{2}}{\left( L\sqrt{\frac{\rho }{E}}%
\right) ^{\alpha }},
\end{equation*}%
and using the fact that the fractional derivative transforms as%
\begin{equation*}
{}_{0}D_{\bar{t}}^{\alpha }u(\bar{t})=\left( L\sqrt{\frac{\rho }{E}}\right)
^{\alpha }{}_{0}D_{t}^{\alpha }u(t),
\end{equation*}%
we obtain, after omitting bar over dimensionless quantities, the following
system%
\begin{eqnarray}
&&\frac{\partial }{\partial x}\sigma \left( x,t\right) =\frac{\partial ^{2}}{%
\partial t^{2}}u\left( x,t\right) ,  \notag \\
&&\int_{0}^{1}\phi _{1}\left( \alpha \right) {}_{0}D_{t}^{\alpha }\sigma
\left( x,t\right) \mathrm{d}\alpha =\int_{0}^{1}\phi _{2}\left( \alpha
\right) {}_{0}D_{t}^{\alpha }\mathcal{E}\left( x,t\right) \mathrm{d}\alpha ,
\label{sys-bd} \\
&&\mathcal{E}\left( x,t\right) =\frac{\partial }{\partial x}u\left(
x,t\right) ,\;\;x\in \left[ 0,1\right] ,\;t>0.  \notag
\end{eqnarray}%
System (\ref{sys-bd}) is subject to initial
\begin{equation}
u\left( x,0\right) =0,\;\;\;\frac{\partial }{\partial t}u\left( x,0\right)
=0,\;\;\;\sigma \left( x,0\right) =0,\;\;\;\mathcal{E}\left( x,0\right)
=0,\;\;x\in \left[ 0,1\right] ,  \label{IC-bd}
\end{equation}%
and boundary conditions%
\begin{equation}
u\left( 0,t\right) =0,\;\;\;u\left( 1,t\right) =\Upsilon \left( t\right)
,\;\;t\in
%TCIMACRO{\U{211d} }%
%BeginExpansion
\mathbb{R}
%EndExpansion
.  \label{BC-u-bd}
\end{equation}%
Additional assumptions on $\Upsilon $ will be given in \S \ref{H} and \S \ref%
{hplusnesto}.

In the sequel we assume that $\phi _{1}$ and $\phi _{2}$ are functions given
by (\ref{a-b}). Using (\ref{a-b}) in (\ref{sys-bd})$_{2}$ and formally
applying the Laplace transformation to (\ref{sys-bd}) and (\ref{IC-bd}), we
obtain%
\begin{eqnarray}
&&\frac{\partial }{\partial x}\tilde{\sigma}\left( x,s\right) =s^{2}\tilde{u}%
\left( x,s\right) ,  \notag \\
&&\tilde{\sigma}\left( x,s\right) \int_{0}^{1}\left( as\right) ^{\alpha }%
\mathrm{d}\alpha =\mathcal{\tilde{E}}\left( x,s\right) \int_{0}^{1}\left(
bs\right) ^{\alpha }\mathrm{d}\alpha ,  \label{S-LT} \\
&&\mathcal{\tilde{E}}\left( x,s\right) =\frac{\partial }{\partial x}\tilde{u}%
\left( x,s\right) ,\;\;x\in \left[ 0,1\right] ,\;s\in D.  \notag
\end{eqnarray}%
Recall, the Laplace transformation of $f\in L_{loc}^{1}\left( \mathbf{%
%TCIMACRO{\U{211d} }%
%BeginExpansion
\mathbb{R}
%EndExpansion
}\right) ,$ $f\equiv 0$ in $\left( -\infty ,0\right] $ and $\left\vert
f\left( t\right) \right\vert \leq c\mathrm{e}^{at},$ $t>0,$ for some $a>0,$
is defined by%
\begin{equation*}
\tilde{f}\left( s\right) =\mathcal{L}\left[ f\left( t\right) \right] \left(
s\right) :=\int_{0}^{\infty }f\left( t\right) e^{-st}\mathrm{d}t,\;\;\func{Re%
}s>a
\end{equation*}%
and analytically continued into the appropriate domain $D.$ Domain $D$ for (%
\ref{S-LT}) is determined after (\ref{u-tilda}), bellow.

System (\ref{S-LT}) reduces to%
\begin{equation*}
\frac{\partial ^{2}}{\partial x^{2}}\tilde{u}\left( x,s\right) -s^{2}\frac{%
\ln \left( bs\right) }{\ln \left( as\right) }\frac{as-1}{bs-1}\tilde{u}%
\left( x,s\right) =0,\;\;x\in \left[ 0,1\right] ,\;s\in D,
\end{equation*}%
whose formal solution is%
\begin{equation}
\tilde{u}\left( x,s\right) =C_{1}\left( s\right) \mathrm{e}^{xs\sqrt{\frac{%
\ln \left( bs\right) }{\ln \left( as\right) }\frac{as-1}{bs-1}}}+C_{2}\left(
s\right) \mathrm{e}^{-xs\sqrt{\frac{\ln \left( bs\right) }{\ln \left(
as\right) }\frac{as-1}{bs-1}}},\;\;x\in \left[ 0,1\right] ,\;s\in D,
\label{u-tilda}
\end{equation}%
where $C_{1}$ and $C_{2}$ are arbitrary functions which will be determined
from the boundary conditions. Since the natural logarithm has the branch
point at $s=0,$ we have $D=%
%TCIMACRO{\U{2102} }%
%BeginExpansion
\mathbb{C}
%EndExpansion
\backslash \left( -\infty ,0\right] $ and this will be used in proposition %
\ref{propP}.

Applying (\ref{BC-u-bd})$_{1}$ we obtain $C=C_{1}=-C_{2}$ and thus%
\begin{equation*}
\tilde{u}\left( x,s\right) =C\left( s\right) \left( \mathrm{e}^{xs\sqrt{%
\frac{\ln \left( bs\right) }{\ln \left( as\right) }\frac{as-1}{bs-1}}}-%
\mathrm{e}^{-xs\sqrt{\frac{\ln \left( bs\right) }{\ln \left( as\right) }%
\frac{as-1}{bs-1}}}\right) ,\;\;x\in \left[ 0,1\right] ,\;s\in
%TCIMACRO{\U{2102} }%
%BeginExpansion
\mathbb{C}
%EndExpansion
\backslash \left( -\infty ,0\right] .
\end{equation*}%
Using (\ref{BC-u-bd})$_{2}$ in the previous expression, it follows%
\begin{equation}
\tilde{u}\left( x,s\right) =\tilde{\Upsilon}\left( s\right) \tilde{P}\left(
x,s\right) ,\;\;x\in \left[ 0,1\right] ,\;s\in
%TCIMACRO{\U{2102} }%
%BeginExpansion
\mathbb{C}
%EndExpansion
\backslash \left( -\infty ,0\right] .  \label{4-0}
\end{equation}%
We introduced $\tilde{P}$ as%
\begin{equation}
\tilde{P}\left( x,s\right) =\frac{\sinh \left( xs\sqrt{\frac{\ln \left(
bs\right) }{\ln \left( as\right) }\frac{as-1}{bs-1}}\right) }{\sinh \left( s%
\sqrt{\frac{\ln \left( bs\right) }{\ln \left( as\right) }\frac{as-1}{bs-1}}%
\right) },\;\;x\in \left[ 0,1\right] ,\;s\in
%TCIMACRO{\U{2102} }%
%BeginExpansion
\mathbb{C}
%EndExpansion
\backslash \left( -\infty ,0\right] .  \label{4-1}
\end{equation}%
Note that $P\left( x,\cdot \right) $ is a distribution supported by $\left[
0,\infty \right) .$ It is clear that for $x=1,$ $P\left( 1,t\right) =\delta
\left( t\right) ,$ $t\in
%TCIMACRO{\U{211d} }%
%BeginExpansion
\mathbb{R}
%EndExpansion
.$ Actually, we will calculate $P\ $and show that for every $x\in \left[ 0,1%
\right] ,$ $P\left( x,\cdot \right) $ is a locally integrable function on $%
%TCIMACRO{\U{211d} }%
%BeginExpansion
\mathbb{R}
%EndExpansion
,$ equal to zero for $t<0.$

Since $\Upsilon $ and $P$ are supported by $\left[ 0,\infty \right) ,$
displacement $u$ is given by%
\begin{eqnarray}
u\left( x,t\right) &=&\Upsilon \left( t\right) \ast P\left( x,t\right)
,\;\;x\in \left[ 0,1\right] ,\;t\in
%TCIMACRO{\U{211d} }%
%BeginExpansion
\mathbb{R}
%EndExpansion
,\;\;\text{so that}  \label{4-2} \\
u\left( x,t\right) &=&0,\;\;x\in \left[ 0,1\right] ,\;t<0,  \notag
\end{eqnarray}%
where we use $\ast $ to denote the convolution. Recall if $f,g\in
L_{loc}^{1}\left( \mathbf{%
%TCIMACRO{\U{211d} }%
%BeginExpansion
\mathbb{R}
%EndExpansion
}\right) ,$ $\limfunc{supp}f,g\subset \left[ 0,\infty \right) ,$ then $%
\left( f\ast g\right) \left( t\right) :=\int_{0}^{t}f\left( \tau \right)
g\left( t-\tau \right) \mathrm{d}\tau ,$ $t\in
%TCIMACRO{\U{211d} }%
%BeginExpansion
\mathbb{R}
%EndExpansion
.$ Calculation of (\ref{4-2}) will be done by the use of the Laplace
inversion formula applied to (\ref{4-0}). Moreover, we will show that $%
u_{H}\left( x,t\right) =H\left( t\right) \ast P\left( x,t\right) ,$ $t\in
%TCIMACRO{\U{211d} }%
%BeginExpansion
\mathbb{R}
%EndExpansion
,$ is a continuous function, equal to zero for $t<0.$

Next, we use (\ref{S-LT})$_{2}$ and (\ref{S-LT})$_{3}$ and obtain
\begin{equation}
\tilde{\sigma}\left( x,s\right) =\frac{\ln \left( as\right) }{\ln \left(
bs\right) }\frac{bs-1}{as-1}\frac{\partial }{\partial x}\tilde{u}\left(
x,s\right) ,\;\;x\in \left[ 0,1\right] ,\;s\in
%TCIMACRO{\U{2102} }%
%BeginExpansion
\mathbb{C}
%EndExpansion
\backslash \left( -\infty ,0\right] .  \label{4-3-1}
\end{equation}%
In order to determine $\sigma ,$ we use (\ref{4-0}) and (\ref{4-1}) in (\ref%
{4-3-1}), so that
\begin{equation}
\tilde{\sigma}\left( x,s\right) =s\tilde{\Upsilon}\left( s\right) \tilde{T}%
\left( x,s\right) ,\;\;x\in \left[ 0,1\right] ,\;s\in
%TCIMACRO{\U{2102} }%
%BeginExpansion
\mathbb{C}
%EndExpansion
\backslash \left( -\infty ,0\right] ,  \label{4-3}
\end{equation}%
where
\begin{equation}
\tilde{T}\left( x,s\right) =\sqrt{\frac{\ln \left( as\right) }{\ln \left(
bs\right) }\frac{bs-1}{as-1}}\frac{\cosh \left( xs\sqrt{\frac{\ln \left(
bs\right) }{\ln \left( as\right) }\frac{as-1}{bs-1}}\right) }{\sinh \left( s%
\sqrt{\frac{\ln \left( bs\right) }{\ln \left( as\right) }\frac{as-1}{bs-1}}%
\right) },\;\;x\in \left[ 0,1\right] ,\;s\in
%TCIMACRO{\U{2102} }%
%BeginExpansion
\mathbb{C}
%EndExpansion
\backslash \left( -\infty ,0\right] .  \label{7-1}
\end{equation}%
Note that $T$ is a locally integrable function for $t>0.$ For the
determination of $\sigma ,$ we will use the Laplace inversion formula
applied to (\ref{4-3}) and obtain
\begin{equation}
\sigma \left( x,t\right) =\frac{\mathrm{d}}{\mathrm{d}t}%
%TCIMACRO{\TeXButton{TeX field}{\big(}}%
%BeginExpansion
\big(%
%EndExpansion
\Upsilon \left( t\right) \ast T\left( x,t\right)
%TCIMACRO{\TeXButton{TeX field}{\big)}}%
%BeginExpansion
\big)%
%EndExpansion
,\;\;x\in \left[ 0,1\right] ,\;t\in
%TCIMACRO{\U{211d} }%
%BeginExpansion
\mathbb{R}
%EndExpansion
,  \label{11}
\end{equation}%
where the derivative is understood in the sense of distributions. Again, $%
\sigma \left( x,t\right) =0$ for $x\in \left[ 0,1\right] ,\;t<0.$ For the
detailed account see \S \textit{\ref{dts}}.

\section{Explicit forms of solutions\label{P}}

In this section we will calculate inverse Laplace transformations of
distributions and functions on $%
%TCIMACRO{\U{211d} }%
%BeginExpansion
\mathbb{R}
%EndExpansion
$ supported by $\left[ 0,\infty \right) .$ For that purposes, let us define%
\begin{equation}
M\left( s\right) :=\sqrt{\frac{\ln \left( bs\right) }{\ln \left( as\right) }%
\frac{as-1}{bs-1}},\;\;s\in
%TCIMACRO{\U{2102} }%
%BeginExpansion
\mathbb{C}
%EndExpansion
\backslash \left( -\infty ,0\right] .  \label{M}
\end{equation}%
In the sequel we will write $A\left( x\right) \sim B\left( x\right) $ if $%
\lim\limits_{x\rightarrow \infty }\frac{A\left( x\right) }{B\left( x\right) }%
=1.$ Next proposition establishes some properties of $M.$

\begin{proposition}
\label{propP}

\begin{enumerate}
\item[(i)] $M$ is an analytic function in $s\in
%TCIMACRO{\U{2102} }%
%BeginExpansion
\mathbb{C}
%EndExpansion
\backslash \left( -\infty ,0\right] ;$

\item[(ii)] $\lim\limits_{\substack{ s\rightarrow 0  \\ s\in
%TCIMACRO{\U{2102} }%
%BeginExpansion
\mathbb{C}
%EndExpansion
\backslash \left( -\infty ,0\right] }}M\left( s\right) =1$\ and$%
\;\lim\limits _{\substack{ \left\vert s\right\vert \rightarrow \infty  \\ %
s\in
%TCIMACRO{\U{2102} }%
%BeginExpansion
\mathbb{C}
%EndExpansion
\backslash \left( -\infty ,0\right] }}M\left( s\right) =\sqrt{\frac{a}{b}}.$

\item[(iii)] Let $p\in \left( 0,s_{0}\right) ,$ $s_{0}>0.$ Then
\begin{equation*}
M\left( p\pm \mathrm{i}R\right) \sim \sqrt{\frac{a}{b}}\frac{1}{\ln \left(
aR\right) }\sqrt[4]{\left( \ln \left( aR\right) \ln \left( bR\right) \right)
^{2}+\left( \frac{\pi }{2}\ln \frac{b}{a}\right) ^{2}}\mathrm{e}^{\mp
\mathrm{i}\arctan \frac{\frac{\pi }{2}\ln \frac{b}{a}}{\ln \left( aR\right)
\ln \left( bR\right) }}
\end{equation*}%
as $R\rightarrow \infty .$
\end{enumerate}
\end{proposition}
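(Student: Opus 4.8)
The plan is to dispose of parts (i) and (ii) by elementary arguments about the principal branches of the logarithm and then to concentrate the real work on the asymptotic estimate (iii). Throughout write $g(s):=\frac{\ln (bs)}{\ln (as)}\frac{as-1}{bs-1}$, so that $M=\sqrt{g}$, and set $U:=\mathbb{C}\backslash \left( -\infty ,0\right] $. I assume $a<b$ (the case $a=b$ gives $g\equiv 1$ and every assertion is trivial).

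For (i) I would first note that since $a,b>0$ the maps $s\mapsto as$ and $s\mapsto bs$ send $U$ into itself, so $\ln (as)$ and $\ln (bs)$ are analytic on $U$. The only possible singularities of $g$ in $U$ are $s=1/a$ and $s=1/b$. Regrouping $g(s)=\frac{\ln (bs)}{bs-1}\cdot \frac{as-1}{\ln (as)}$ and using $\frac{w-1}{\ln w}\to 1$ as $w\to 1$, both are removable, with $g(1/a)=\frac{\ln (b/a)}{b/a-1}\neq 0$ and $g(1/b)=\frac{a/b-1}{\ln (a/b)}\neq 0$; elsewhere $g$ is a product and quotient of analytic functions with nonvanishing denominator. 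Hence $g$ is analytic and nowhere zero on $U$, and since $U$ is simply connected it admits an analytic logarithm. Then $M=\exp (\tfrac{1}{2}\log g)$ is analytic, the branch being the one with $M(s)\to 1$ as $s\to 0^{+}$; this gives (i).

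For (ii) I would pass to the limit inside $g$. As $s\to 0$ in $U$ we have $\frac{\ln (bs)}{\ln (as)}=\frac{\ln b+\ln s}{\ln a+\ln s}\to 1$ because $|\ln s|\to \infty $, while $\frac{as-1}{bs-1}\to 1$; thus $g\to 1$ and $M\to 1$. As $|s|\to \infty $ the same log–ratio again tends to $1$ while $\frac{as-1}{bs-1}\to \frac{a}{b}$, so $g\to \frac{a}{b}$ and $M\to \sqrt{a/b}$, the positive root being forced by continuity from (i).

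Part (iii) is the main obstacle and needs careful bookkeeping. Put $s=p\pm \mathrm{i}R$ and abbreviate $L_{a}=\ln (aR)$, $L_{b}=\ln (bR)$. Because $\arg (as)=\arg (bs)=\pm \arctan (R/p)=\pm (\tfrac{\pi }{2}-p/R+\cdots )$ and $\ln |as|=\ln (aR)+O(1/R^{2})$, one gets $\ln (as)=L_{a}\pm \mathrm{i}\tfrac{\pi }{2}+o(1)$ and $\ln (bs)=L_{b}\pm \mathrm{i}\tfrac{\pi }{2}+o(1)$, whereas $\frac{as-1}{bs-1}=\frac{a}{b}(1+O(1/R))$. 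Substituting into $M^{2}=g$ yields
\[
M(p\pm \mathrm{i}R)^{2}\sim \frac{a}{b}\,\frac{L_{b}\pm \mathrm{i}\frac{\pi }{2}}{L_{a}\pm \mathrm{i}\frac{\pi }{2}}.
\]
Rationalizing the quotient $Q$ on the right, its numerator is $L_{a}L_{b}+(\tfrac{\pi }{2})^{2}\mp \mathrm{i}\tfrac{\pi }{2}\ln \tfrac{b}{a}$ and its denominator $L_{a}^{2}+(\tfrac{\pi }{2})^{2}$; retaining leading terms gives $|Q|\sim L_{a}^{-2}\sqrt{(L_{a}L_{b})^{2}+(\tfrac{\pi }{2}\ln \tfrac{b}{a})^{2}}$ and $\arg Q\sim \mp \arctan \frac{\frac{\pi }{2}\ln \frac{b}{a}}{L_{a}L_{b}}$. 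Consequently $|M|=\sqrt{a/b}\,|Q|^{1/2}=\sqrt{a/b}\,\frac{1}{L_{a}}\big((L_{a}L_{b})^{2}+(\tfrac{\pi }{2}\ln \tfrac{b}{a})^{2}\big)^{1/4}$, which is exactly the displayed modulus (the fourth root appearing because a square root of $|Q|$ is taken of a quantity that already carries a square root). The delicate points are to keep the constant imaginary parts $\pm \mathrm{i}\tfrac{\pi }{2}$ of the two logarithms, which are what generate the $\tfrac{\pi }{2}\ln \tfrac{b}{a}$ term, while discarding the $O(1/R)$ contribution of $\frac{as-1}{bs-1}$, which is negligible against the retained $1/(\ln R)^{2}$–size corrections, and to check that the phase stays small so that the principal square root continues the branch fixed in (i). Taking that principal square root halves $\arg Q$; since the resulting phase factor tends to $1$ and both $M(p\pm \mathrm{i}R)$ and the right-hand side converge to $\sqrt{a/b}\neq 0$, the relation $\sim$ follows once the modulus has been matched to leading order.
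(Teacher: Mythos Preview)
Your proof is correct and follows essentially the same route as the paper: removable singularities at $s=1/a,\,1/b$ for (i), direct limits for (ii), and an asymptotic expansion of $M^{2}$ for (iii), though your (iii) is organized more cleanly through $\ln(as)\sim L_{a}\pm\mathrm{i}\tfrac{\pi}{2}$ rather than the paper's brute-force separation into real and imaginary parts from the outset. Your treatment of (i) is in fact more complete than the paper's, which only checks that $g=M^{2}$ has removable singularities at the two suspect points and never explicitly justifies a single-valued analytic square root; your argument via nonvanishing of $g$ on the simply connected domain $U$ closes that gap, and your closing remark that the phase factor tends to $1$ correctly handles the branch and renders the $\sim$ relation immediate.
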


\begin{proof}
We first prove (i). The only points where $M$ could be singular are $s=\frac{%
1}{a}$ and $s=\frac{1}{b}.$ Since,
\begin{equation*}
\frac{\ln \left( bs\right) }{bs-1}=\frac{\ln \left( 1+\left( bs-1\right)
\right) }{bs-1}=\sum_{n=0}^{\infty }\frac{\left( -1\right) ^{n}\left(
bs-1\right) ^{n}}{n+1},\;\;\left\vert bs-1\right\vert \in \left( -1,1\right]
,
\end{equation*}%
it is obvious that $s=\frac{1}{b}$ is a regular point of $M.$ Similar
arguments hold for $s=\frac{1}{a}.$ Limits in (ii) can easily be calculated.
In order to prove (iii), let us introduce $\mu =\sqrt{p^{2}+R^{2}}$ and $\nu
=\arctan \frac{R}{p}.$ It is obvious that $\mu \sim R,$ $\nu \sim \frac{\pi
}{2},$ as $R\rightarrow \infty .$ Then $M\left( p\pm \mathrm{i}R\right) $
becomes%
\begin{eqnarray*}
&&M\left( p\pm \mathrm{i}R\right) \\
&&\;\;\;\;=\sqrt{\frac{\ln \left( a\mu \right) \ln \left( b\mu \right) +\nu
^{2}\mp \mathrm{i}\nu \ln \frac{b}{a}}{\ln ^{2}\left( a\mu \right) +\nu ^{2}}%
}\sqrt{\frac{\left( ap-1\right) \left( bp-1\right) +abR^{2}\pm \mathrm{i}%
R\left( b-a\right) }{\left( bp-1\right) ^{2}+\left( bR\right) ^{2}}} \\
&&\;\;\;\;=\left( \frac{\left[ abR^{2}+\left( ap-1\right) \left( bp-1\right) %
\right] \left[ \ln \left( a\mu \right) \ln \left( b\mu \right) +\nu ^{2}%
\right] +R\nu \left( b-a\right) \ln \frac{b}{a}}{\left( \left( bR\right)
^{2}+\left( bp-1\right) ^{2}\right) \left( \ln ^{2}\left( a\mu \right) +\nu
^{2}\right) }\right. \\
&&\;\;\;\;\;\;\;\;\left. \pm \mathrm{i}\frac{R\left( b-a\right) \left[ \ln
\left( a\mu \right) \ln \left( b\mu \right) +\nu ^{2}\right] -\nu \ln \frac{b%
}{a}\left[ abR^{2}+\left( ap-1\right) \left( bp-1\right) \right] }{\left(
\left( bR\right) ^{2}+\left( bp-1\right) ^{2}\right) \left( \ln ^{2}\left(
a\mu \right) +\nu ^{2}\right) }\right) ^{\frac{1}{2}} \\
&&\;\;\;\;\sim \sqrt{\frac{a}{b}}\frac{1}{\ln \left( aR\right) }\sqrt{\ln
\left( aR\right) \ln \left( bR\right) \mp \mathrm{i}\frac{\pi }{2}\ln \frac{b%
}{a}},\;\;\text{as}\;\;R\rightarrow \infty .
\end{eqnarray*}
\end{proof}

\subsection{Determination of displacement $u$ in case of stress relaxation
\label{DSR}}

We investigate properties of $\tilde{P},$ given by (\ref{4-1}), and find a
solution to (\ref{sys-bd}), (\ref{IC-bd}), (\ref{BC-u-bd}), with weight
functions given by (\ref{a-b}), in two steps. First, we find a solution with
the boundary condition (\ref{BC-u-bd}) in the case when
\begin{equation}
\Upsilon \left( t\right) =\Upsilon _{0}H\left( t\right) ,\;\;\Upsilon
_{0}>0,\;t\in
%TCIMACRO{\U{211d} }%
%BeginExpansion
\mathbb{R}
%EndExpansion
.  \label{5}
\end{equation}%
Then we proceed to a more general case, when $\Upsilon $ is assumed to be of
the form%
\begin{equation}
\Upsilon \left( t\right) =\Upsilon _{0}H\left( t\right) +F\left( t\right)
,\;\;t\in
%TCIMACRO{\U{211d} }%
%BeginExpansion
\mathbb{R}
%EndExpansion
,  \label{6}
\end{equation}%
where $F$ is a locally integrable function, equal to zero on $\left( -\infty
,0\right] .$ Additional assumptions on $F$ will be postulated in \S \ref%
{hplusnesto}.

Let us examine properties of $\tilde{P},$ given by (\ref{4-1}). Clearly, it
has complex conjugated poles at ${}_{P}s_{n}^{\left( \pm \right) },$ $n\in
%TCIMACRO{\U{2115} }%
%BeginExpansion
\mathbb{N}
%EndExpansion
.$ Poles are solutions of
\begin{equation}
\sinh \left( sM\left( s\right) \right) =0,\;\;\text{i.e.}\;\;sM\left(
s\right) =\pm n\mathrm{i}\pi .  \label{21-1}
\end{equation}%
Let us examine the position and the multiplicity of solutions to (\ref{21-1}%
).

\begin{proposition}
\label{levo}There are infinitely many solutions ${}_{P}s_{n}^{\left( \pm
\right) },$ $n\in
%TCIMACRO{\U{2115} }%
%BeginExpansion
\mathbb{N}
%EndExpansion
,$ of (\ref{21-1}), such that%
\begin{eqnarray}
&&\func{Re}\left( {}_{P}s_{n}^{\left( \pm \right) }\right) \approx -\frac{%
\frac{\pi }{4}\ln \frac{b}{a}\sqrt{\frac{b}{a}}n\pi }{\ln \left( \sqrt{ab}%
n\pi \right) \ln \left( b\sqrt{\frac{b}{a}}n\pi \right) },  \label{Re s} \\
&&\func{Im}\left( {}_{P}s_{n}^{\left( \pm \right) }\right) \approx \pm
R\approx \pm \sqrt{\frac{b}{a}}n\pi ,  \label{Im s}
\end{eqnarray}%
as $n\rightarrow \infty .$ Moreover, there exists $n_{0}\in
%TCIMACRO{\U{2115} }%
%BeginExpansion
\mathbb{N}
%EndExpansion
,$ such that poles ${}_{P}s_{n}^{\left( \pm \right) },$ for $n>n_{0},$ are
simple.
\end{proposition}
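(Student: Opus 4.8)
The plan is to handle the two families $\{{}_{P}s_{n}^{(\pm)}\}$ together by conjugate symmetry: since $a,b$ are real and $M(\bar{s})=\overline{M(s)}$, the equation $sM(s)=-n\mathrm{i}\pi$ is the complex conjugate of $sM(s)=n\mathrm{i}\pi$, so it suffices to locate the solutions ${}_{P}s_{n}^{(+)}$ with $\func{Im}s>0$ and then set ${}_{P}s_{n}^{(-)}=\overline{{}_{P}s_{n}^{(+)}}$. Writing $s=p+\mathrm{i}R$, I would seek solutions with $R\to\infty$ and $p$ bounded, so that the asymptotics of $M$ near the imaginary axis apply. The heart of the argument is to split the single complex equation $sM(s)=n\mathrm{i}\pi$ into its modulus and argument parts and solve them asymptotically for $R$ and $p$.

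For this I would use the form of $M$ established in the proof of Proposition \ref{propP}(iii), namely $M(p+\mathrm{i}R)\sim\sqrt{a/b}\,\tfrac{1}{\ln(aR)}\sqrt{\ln(aR)\ln(bR)-\mathrm{i}\tfrac{\pi}{2}\ln\tfrac{b}{a}}$ as $R\to\infty$. Taking moduli, $|M|\sim\sqrt{a/b}\,\sqrt{\ln(bR)/\ln(aR)}\to\sqrt{a/b}$, while $|s|=\sqrt{p^{2}+R^{2}}\sim R$, so the modulus equation $|s|\,|M|=n\pi$ forces $R\sqrt{a/b}\sim n\pi$, i.e. $R\sim\sqrt{b/a}\,n\pi$, which is (\ref{Im s}). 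The decisive point for the real part is that $M$ is a square root, so its argument is \emph{half} that of the radicand:
\begin{equation*}
\arg M(p+\mathrm{i}R)\sim-\tfrac{1}{2}\arctan\frac{\tfrac{\pi}{2}\ln\tfrac{b}{a}}{\ln(aR)\ln(bR)} .
\end{equation*}
Combining this with $\arg s=\tfrac{\pi}{2}-p/R+o(1/R)$ and imposing $\arg s+\arg M=\tfrac{\pi}{2}$ gives
\begin{equation*}
p\sim-\frac{R}{2}\,\frac{\tfrac{\pi}{2}\ln\tfrac{b}{a}}{\ln(aR)\ln(bR)}=-\frac{\tfrac{\pi}{4}\ln\tfrac{b}{a}\,R}{\ln(aR)\ln(bR)} .
\end{equation*}
Substituting $R\sim\sqrt{b/a}\,n\pi$ together with $aR\sim\sqrt{ab}\,n\pi$ and $bR\sim b\sqrt{b/a}\,n\pi$ yields exactly (\ref{Re s}); in particular $p<0$ and $p/R\to0$, confirming a posteriori that $p$ stays bounded. (It is this square-root halving that produces the factor $\tfrac{\pi}{4}$ rather than $\tfrac{\pi}{2}$.)

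I expect the main obstacle to be upgrading this asymptotic bookkeeping to a genuine existence statement. The expansion in Proposition \ref{propP}(iii) is stated for $p\in(0,s_{0})$, whereas the poles sit slightly to the left of the imaginary axis, so I would first verify that the estimate holds uniformly for $p$ in a bounded interval — which it does, since to leading order $\mu=|s|\sim R$ and $\arg s\to\tfrac{\pi}{2}$ irrespective of the sign of a bounded $p$. Granting this, existence for each large $n$ follows from a Rouché / argument-principle argument applied to $N(s):=sM(s)$: on a small circle centered at the predicted location $\sqrt{b/a}\,n\mathrm{i}\pi$ one shows that $N(s)-n\mathrm{i}\pi$ has exactly one zero, or equivalently one uses that $N$ is analytic with $N(s)\sim s\sqrt{a/b}$, hence locally invertible, and checks that $n\mathrm{i}\pi$ lies in the image of a neighborhood of the predicted point. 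This produces infinitely many solutions, one for each sufficiently large $n$, located as in (\ref{Re s})--(\ref{Im s}).

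For the simplicity claim I would differentiate $g(s):=\sinh(sM(s))$. Since $g'(s)=\cosh(sM(s))\,\tfrac{\mathrm{d}}{\mathrm{d}s}(sM(s))$ and at any solution of (\ref{21-1}) one has $\cosh(sM(s))=\cosh(\pm n\mathrm{i}\pi)=(-1)^{n}\neq0$, a zero of $g$ is simple precisely when $\tfrac{\mathrm{d}}{\mathrm{d}s}(sM(s))\neq0$ there. Writing $(sM(s))^{2}=s^{2}\tfrac{\ln(bs)}{\ln(as)}\tfrac{as-1}{bs-1}\sim\tfrac{a}{b}\,s^{2}$, differentiation gives $\tfrac{\mathrm{d}}{\mathrm{d}s}(sM(s))\to\sqrt{a/b}\neq0$ as $|s|\to\infty$. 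Hence there is $n_{0}\in\mathbb{N}$ beyond which this derivative cannot vanish at a pole, so ${}_{P}s_{n}^{(\pm)}$ is simple for all $n>n_{0}$, completing the proposition.
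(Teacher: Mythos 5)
Your derivation of the formulas (\ref{Re s})--(\ref{Im s}) is, at its core, the paper's argument in different coordinates. The paper squares (\ref{21-1}), writes $s=R\mathrm{e}^{\mathrm{i}\phi }$, and separates real and imaginary parts of $\left( sM\left( s\right) \right) ^{2}=-n^{2}\pi ^{2}$, obtaining $\tan \left( 2\phi \right) \approx \phi \ln \frac{b}{a}/\left( \ln \left( aR\right) \ln \left( bR\right) \right) $ and hence $\phi \rightarrow \frac{\pi }{2}$; you keep the equation unsquared, write $s=p+\mathrm{i}R$, and match modulus and argument. Both routes rest on the same input — the asymptotics of $M$ from Proposition \ref{propP}(iii), with the square-root halving of the argument that you correctly supply (the displayed statement of (iii) is missing the factor $\frac{1}{2}$ in the exponent, but the last line of its proof contains it) — and both produce the identical $\frac{\pi }{4}$ and the substitutions $aR\sim \sqrt{ab}\,n\pi $, $bR\sim b\sqrt{b/a}\,n\pi $. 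Your simplicity argument is also the paper's in a cleaner factorized form: the paper computes $\frac{\mathrm{d}}{\mathrm{d}s}\sinh \left( sM\left( s\right) \right) $ explicitly and shows it behaves like $\left( -1\right) ^{n}\sqrt{a/b}$ at the poles, which is exactly your observation that $\cosh \left( \pm n\mathrm{i}\pi \right) =\left( -1\right) ^{n}\neq 0$ together with $\left( sM\left( s\right) \right) ^{\prime }\rightarrow \sqrt{a/b}$; just compute that derivative explicitly as the paper does, since differentiating the asymptotic relation $\left( sM\left( s\right) \right) ^{2}\sim \frac{a}{b}s^{2}$ is not by itself legitimate.

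There is, however, a genuine flaw in your framing of the regime and in the existence step built on it. Your own formula gives $p_{n}\sim -\frac{\pi }{4}\ln \frac{b}{a}\,R_{n}/\left( \ln \left( aR_{n}\right) \ln \left( bR_{n}\right) \right) $ with $R_{n}\sim \sqrt{b/a}\,n\pi $, so $\left\vert p_{n}\right\vert $ grows like $n/\ln ^{2}n\rightarrow \infty $: the real parts of the poles are \emph{not} bounded, and your ``a posteriori confirmation that $p$ stays bounded'' is false — only $p/R\rightarrow 0$ is true. The derivation survives once the ansatz is weakened to $p=o\left( R\right) $ (the proof of Proposition \ref{propP}(iii) goes through in that regime, since it only needs $\mu \sim R$ and $\nu \rightarrow \frac{\pi }{2}$), and this is precisely what the paper's polar parametrization handles automatically, as it assumes nothing about $\func{Re}s$ beyond $\phi \in \left( -\pi ,\pi \right) $. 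The same misconception sinks the Rouch\'{e} step as sketched: a small circle centered at $\sqrt{b/a}\,n\mathrm{i}\pi $ cannot contain ${}_{P}s_{n}^{\left( +\right) }$, because the pole drifts left of that point by $\sim n/\ln ^{2}n$, and its imaginary part deviates from $\sqrt{b/a}\,n\pi $ by $\sim n/\ln n$ (coming from $\left\vert M\right\vert \sim \sqrt{a/b}\sqrt{\ln \left( bR\right) /\ln \left( aR\right) }$, not merely $\sqrt{a/b}$); both deviations exceed the $O\left( 1\right) $ spacing of consecutive poles, so such disks eventually contain the wrong poles or none at all, and comparing $\sinh \left( sM\left( s\right) \right) $ with $\sinh \left( s\sqrt{a/b}\right) $ fails outright because the perturbation grows like $\left\vert s\right\vert /\ln \left\vert s\right\vert $. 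To be fair, the paper never proves existence either — its proof, like the corrected version of yours, only localizes solutions asymptotically — so this does not put you behind the paper; but the existence claim should not be presented as a routine Rouch\'{e} argument, since it needs contours adapted to the drift.
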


\begin{proof}
Let us square (\ref{21-1}) and put ${}_{P}s_{n}^{\left( \pm \right) }=R%
\mathrm{e}^{\mathrm{i}\phi },$ $\phi \in \left( -\pi ,\pi \right) .$ Then,
after separation of real and imaginary parts, we obtain
\begin{eqnarray}
&&R^{2}\cos \left( 2\phi \right) \func{Re}\left( M^{2}\left( R\mathrm{e}^{%
\mathrm{i}\phi }\right) \right) -R^{2}\sin \left( 2\phi \right) \func{Im}%
\left( M^{2}\left( R\mathrm{e}^{\mathrm{i}\phi }\right) \right) =-n^{2}\pi
^{2},  \label{1-a} \\
&&R^{2}\sin \left( 2\phi \right) \func{Re}\left( M^{2}\left( R\mathrm{e}^{%
\mathrm{i}\phi }\right) \right) +R^{2}\cos \left( 2\phi \right) \func{Im}%
\left( M^{2}\left( R\mathrm{e}^{\mathrm{i}\phi }\right) \right) =0.
\label{2-a}
\end{eqnarray}%
By the use of (\ref{M}), real and imaginary parts of $M^{2}\left( R\mathrm{e}%
^{\mathrm{i}\phi }\right) $ are
\begin{eqnarray*}
\func{Re}\left( M^{2}\left( R\mathrm{e}^{\mathrm{i}\phi }\right) \right) &=&%
\frac{\left( \ln \left( aR\right) \ln \left( bR\right) +\phi ^{2}\right)
\left( abR^{2}-\left( a+b\right) R\cos \phi +1\right) }{\left( \ln
^{2}\left( aR\right) +\phi ^{2}\right) \left( b^{2}R^{2}-2bR\cos \phi
+1\right) } \\
&&+\frac{\ln \frac{b}{a}\left( b-a\right) R\phi \sin \phi }{\left( \ln
^{2}\left( aR\right) +\phi ^{2}\right) \left( b^{2}R^{2}-2bR\cos \phi
+1\right) }, \\
\func{Im}\left( M^{2}\left( R\mathrm{e}^{\mathrm{i}\phi }\right) \right) &=&-%
\frac{\phi \ln \frac{b}{a}\left( abR^{2}-\left( a+b\right) R\cos \phi
+1\right) }{\left( \ln ^{2}\left( aR\right) +\phi ^{2}\right) \left(
b^{2}R^{2}-2bR\cos \phi +1\right) } \\
&&+\frac{+\left( b-a\right) R\sin \phi \left( \ln \left( aR\right) \ln
\left( bR\right) +\phi ^{2}\right) }{\left( \ln ^{2}\left( aR\right) +\phi
^{2}\right) \left( b^{2}R^{2}-2bR\cos \phi +1\right) }.
\end{eqnarray*}%
Letting $R\rightarrow \infty ,$ previous expressions are written as
\begin{eqnarray}
\func{Re}\left( M^{2}\left( R\mathrm{e}^{\mathrm{i}\phi }\right) \right)
&\approx &\frac{abR^{2}\ln \left( aR\right) \ln \left( bR\right) }{%
b^{2}R^{2}\ln ^{2}\left( aR\right) }=\frac{a}{b}\frac{\ln \left( bR\right) }{%
\ln \left( aR\right) },  \label{3-a} \\
\func{Im}\left( M^{2}\left( R\mathrm{e}^{\mathrm{i}\phi }\right) \right)
&\approx &-\frac{ab\ln \frac{b}{a}R^{2}\phi }{b^{2}R^{2}\ln ^{2}\left(
aR\right) }=-\frac{a}{b}\ln \frac{b}{a}\phi \frac{1}{\ln ^{2}\left(
aR\right) }.  \label{4-a}
\end{eqnarray}%
Using (\ref{2-a}), (\ref{3-a}) and (\ref{4-a}), we obtain%
\begin{equation}
\tan \left( 2\phi \right) =-\frac{\func{Im}\left( M^{2}\left( R\mathrm{e}^{%
\mathrm{i}\phi }\right) \right) }{\func{Re}\left( M^{2}\left( R\mathrm{e}^{%
\mathrm{i}\phi }\right) \right) }\approx \phi \frac{\ln \frac{b}{a}}{\ln
\left( aR\right) \ln \left( bR\right) }.  \label{tg fi/fi}
\end{equation}%
Let $\phi \in \left( 0,\pi \right) .$ Then $\frac{\tan \left( 2\phi \right)
}{\phi }>0$ and $\frac{\tan \left( 2\phi \right) }{\phi }\rightarrow 0$ as $%
R\rightarrow \infty .$ Hence, $\phi \in \left( 0,\frac{\pi }{4}\right) $ or $%
\phi \in \left( \frac{\pi }{2},\frac{3\pi }{4}\right) .$ Since $\phi \neq 0$
and $\tan \left( 2\phi \right) \rightarrow 0,$ it follows that $\phi
\rightarrow \frac{\pi }{2}$ from the interval $\phi \in \left( \frac{\pi }{2}%
,\frac{3\pi }{4}\right) .$ Therefore, by (\ref{tg fi/fi}), we have
\begin{equation}
\sin \phi \approx 1-\left( \frac{\frac{\pi }{2}\ln \frac{b}{a}}{2\ln \left(
aR\right) \ln \left( bR\right) }\right) ^{2}\approx 1,\;\;\cos \phi \approx -%
\frac{\frac{\pi }{2}\ln \frac{b}{a}}{2\ln \left( aR\right) \ln \left(
bR\right) }.  \label{kosinus}
\end{equation}%
Inserting (\ref{3-a}), (\ref{4-a}) and (\ref{kosinus}) in (\ref{1-a}), we
obtain%
\begin{equation}
\frac{1}{\sqrt{\ln ^{2}\left( aR\right) \ln ^{2}\left( bR\right) +\left(
\frac{\pi }{2}\ln \frac{b}{a}\right) ^{2}}}\left( \ln ^{2}\left( bR\right) +%
\frac{\left( \frac{\pi }{2}\ln \frac{b}{a}\right) ^{2}}{\ln ^{2}\left(
aR\right) }\right) \approx \frac{b}{a}\frac{n^{2}\pi ^{2}}{R^{2}}\approx 1.
\label{R}
\end{equation}%
Thus, real and imaginary parts of ${}_{P}s_{n}^{\left( \pm \right) },$ as $%
R\rightarrow \infty ,$ obtained by (\ref{kosinus}) and (\ref{R}), are as
stated in proposition.

In order to prove that solutions to (\ref{21-1}) are simple for $n>n_{0},$
we define
\begin{equation*}
f\left( s\right) :=\sinh \left( sM\left( s\right) \right) ,\;\;s\in
%TCIMACRO{\U{2102} }%
%BeginExpansion
\mathbb{C}
%EndExpansion
\backslash \left( -\infty ,0\right] .
\end{equation*}%
Then ($s\in
%TCIMACRO{\U{2102} }%
%BeginExpansion
\mathbb{C}
%EndExpansion
\backslash \left( -\infty ,0\right] $)%
\begin{equation*}
\frac{\mathrm{d}}{\mathrm{d}s}f\left( s\right) =M\left( s\right) \left( 1-%
\frac{\ln \frac{b}{a}}{2\ln \left( as\right) \ln \left( bs\right) }+\frac{%
\left( b-a\right) s}{2\left( as-1\right) \left( bs-1\right) }\right) \cosh
\left( sM\left( s\right) \right) .
\end{equation*}%
Solutions to $f\left( s\right) =0$ are given by (\ref{21-1}), and so, as $%
\left\vert {}_{P}s_{n}^{\left( \pm \right) }\right\vert \rightarrow \infty ,$%
\begin{equation*}
\left. \frac{\mathrm{d}}{\mathrm{d}s}f\left( s\right) \right\vert
_{s={}_{P}s_{n}^{\left( \pm \right) }}\sim \left( -1\right) ^{n}\left[
M\left( s\right) \left( 1-\frac{\ln \frac{b}{a}}{2\ln \left( as\right) \ln
\left( bs\right) }+\frac{\left( b-a\right) s}{2\left( as-1\right) \left(
bs-1\right) }\right) \right] _{s={}_{P}s_{n}^{\left( \pm \right) }}.
\end{equation*}%
By proposition \ref{propP}, $M\sim \sqrt{\frac{a}{b}}$ and this implies that
\begin{equation*}
\left. \frac{\mathrm{d}}{\mathrm{d}s}f\left( s\right) \right\vert
_{s={}_{P}s_{n}^{\left( \pm \right) }}\sim \left( -1\right) ^{n}\sqrt{\frac{a%
}{b}}\;\;\text{as}\;\;\left\vert {}_{P}s_{n}^{\left( \pm \right)
}\right\vert \rightarrow \infty .
\end{equation*}%
Thus, for large $\left\vert {}_{P}s_{n}^{\left( \pm \right) }\right\vert $
we have $\left. \frac{\mathrm{d}}{\mathrm{d}s}f\left( s\right) \right\vert
_{s={}_{P}s_{n}^{\left( \pm \right) }}\neq 0$ and solutions are simple for $%
n>n_{0}$.
\end{proof}

\subsubsection{Case $\Upsilon =\Upsilon _{0}H$\label{H}}

This is the case that has physical importance, since we obtain displacement
in case of stress relaxation test. Formally, we write (\ref{4-2}) with (\ref%
{5}) as%
\begin{equation}
u_{H}\left( x,t\right) =\Upsilon _{0}H\left( t\right) \ast P\left(
x,t\right) ,\;\;x\in \left[ 0,1\right] ,\;t\in
%TCIMACRO{\U{211d} }%
%BeginExpansion
\mathbb{R}
%EndExpansion
.  \label{u-h}
\end{equation}

The following theorem is on existence and properties of $u_{H}.$

\begin{theorem}
Let $\Upsilon =\Upsilon _{0}H$ and let $\phi _{1}$ and $\phi _{2}$ be given
by (\ref{a-b}). Then the solution to (\ref{sys-bd}), (\ref{IC-bd}), (\ref%
{BC-u-bd}) is given by (\ref{u-h}), where%
\begin{eqnarray}
P\left( x,t\right) &=&\frac{1}{2\pi \mathrm{i}}\dint\nolimits_{0}^{\infty
}\left( \frac{\sinh \left( xqM\left( q\mathrm{e}^{-\mathrm{i}\pi }\right)
\right) }{\sinh \left( qM\left( q\mathrm{e}^{-\mathrm{i}\pi }\right) \right)
}-\frac{\sinh \left( xqM\left( q\mathrm{e}^{\mathrm{i}\pi }\right) \right) }{%
\sinh \left( qM\left( q\mathrm{e}^{\mathrm{i}\pi }\right) \right) }\right)
\mathrm{e}^{-qt}\mathrm{d}q  \notag \\
&&+\sum_{n=1}^{\infty }\left[ \func{Res}\left( \tilde{P}\left( x,s\right)
\mathrm{e}^{st},{}_{P}s_{n}^{\left( +\right) }\right) \right.  \notag \\
&&\left. +\func{Res}\left( \tilde{P}\left( x,s\right) \mathrm{e}%
^{st},{}_{P}s_{n}^{\left( -\right) }\right) \right] ,\;\;x%
\begin{tabular}{l}
$\in $%
\end{tabular}%
\left[ 0,1\right] ,\;t%
\begin{tabular}{l}
\TEXTsymbol{>}%
\end{tabular}%
0,  \label{P1} \\
P\left( x,t\right) &=&0,\;\;x\in \left[ 0,1\right] ,\;t<0.  \label{uhtm0}
\end{eqnarray}%
The residues are given by%
\begin{equation}
\func{Res}\left( \tilde{P}\left( x,s\right) \mathrm{e}^{st},{}_{P}s_{n}^{%
\left( \pm \right) }\right) =\left[ \frac{\sinh \left( xsM\left( s\right)
\right) }{\frac{\mathrm{d}}{\mathrm{d}s}\left[ \sinh \left( sM\left(
s\right) \right) \right] }\mathrm{e}^{st}\right] _{s={}_{P}s_{n}^{\left( \pm
\right) }}  \label{res-P}
\end{equation}%
and simple poles ${}_{P}s_{n}^{\left( \pm \right) },$ for $n>n_{0},$ are
solutions of (\ref{21-1}). Function $P$ is real-valued, locally integrable
on $%
%TCIMACRO{\U{211d} }%
%BeginExpansion
\mathbb{R}
%EndExpansion
$ and smooth for $t>0.$

The explicit form of solution is%
\begin{eqnarray}
u_{H}\left( x,t\right) &=&\frac{\Upsilon _{0}}{2\pi \mathrm{i}}%
\dint\nolimits_{0}^{\infty }\left( \frac{\sinh \left( xqM\left( q\mathrm{e}%
^{-\mathrm{i}\pi }\right) \right) }{\sinh \left( qM\left( q\mathrm{e}^{-%
\mathrm{i}\pi }\right) \right) }-\frac{\sinh \left( xqM\left( q\mathrm{e}^{%
\mathrm{i}\pi }\right) \right) }{\sinh \left( qM\left( q\mathrm{e}^{\mathrm{i%
}\pi }\right) \right) }\right) \frac{1-\mathrm{e}^{-qt}}{q}\mathrm{d}q
\notag \\
&&+\int\nolimits_{0}^{t}%
%TCIMACRO{\TeXButton{TeX field}{\Bigg(}}%
%BeginExpansion
\Bigg(%
%EndExpansion
\sum_{n=1}^{\infty }\left[ \func{Res}\left( \tilde{P}\left( x,s\right)
\mathrm{e}^{s\tau },{}_{P}s_{n}^{\left( +\right) }\right) \right.  \notag \\
&&\left. +\func{Res}\left( \tilde{P}\left( x,s\right) \mathrm{e}^{s\tau
},{}_{P}s_{n}^{\left( -\right) }\right) \right]
%TCIMACRO{\TeXButton{TeX field}{\Bigg)}}%
%BeginExpansion
\Bigg)%
%EndExpansion
\mathrm{d}\tau ,\;\;x%
\begin{tabular}{l}
$\in $%
\end{tabular}%
\left[ 0,1\right] ,\;t%
\begin{tabular}{l}
\TEXTsymbol{>}%
\end{tabular}%
0,  \label{uha} \\
u_{H}\left( x,t\right) &=&0,\;\;x\in \left[ 0,1\right] ,\;t<0.  \label{uha0}
\end{eqnarray}%
Function $u_{H}$ is continuous at $t=0.$
\end{theorem}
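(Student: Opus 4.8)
The plan is to obtain $P(x,\cdot)$ as the inverse Laplace transform of $\tilde{P}(x,s)$ from (\ref{4-1}) by applying the Bromwich formula and then deforming the vertical line of integration into a contour that hugs the branch cut $(-\infty,0]$ of $M$ while enclosing the poles ${}_{P}s_{n}^{(\pm)}$. Concretely, I would integrate $\tilde{P}(x,s)\mathrm{e}^{st}$ along the boundary of the region $\{\,|s|\le R_n,\ |\arg s|\le\pi-\epsilon\,\}$ slit along the negative axis, let $\epsilon\to0$ to recover the two banks $s=q\mathrm{e}^{\pm\mathrm{i}\pi}$ of the cut, and let $n\to\infty$ through radii $R_n$ chosen to thread between consecutive poles. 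By the residue theorem the contour integral equals $2\pi\mathrm{i}$ times the sum of the enclosed residues, and in the limit the large arc drops out, leaving exactly the branch-cut integral and the residue series of (\ref{P1}); the small circle about $s=0$ contributes nothing because $M\to1$ there by Proposition \ref{propP}(ii), so $\tilde{P}$ stays bounded near the origin.

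First I would record the two ingredients that make the deformation legitimate. The poles lie strictly in the left half-plane and are simple for $n>n_0$ by Proposition \ref{levo}, so each residue is computed via (\ref{res-P}) with $\frac{\mathrm{d}}{\mathrm{d}s}[\sinh(sM(s))]=\cosh(sM(s))\frac{\mathrm{d}}{\mathrm{d}s}(sM(s))$, and at a pole $sM(s)=\pm n\mathrm{i}\pi$ this collapses using $\sinh(\pm n\mathrm{i}\pi x)=\pm\mathrm{i}\sin(n\pi x)$ and $\cosh(\pm n\mathrm{i}\pi)=(-1)^n$. To annihilate the circular arcs I would use the asymptotics of $M$ from Proposition \ref{propP}(iii) together with the pole spacing from Proposition \ref{levo}: placing $R_n$ between the $n$th and $(n{+}1)$th poles keeps $|\sinh(sM(s))|$ bounded below, so the ratio $|\tilde{P}(x,s)|$ is bounded uniformly in $x\in[0,1]$ on the arcs (this is where $x\le1$ enters, controlling $|\sinh(xsM(s))|$ against $|\sinh(sM(s))|$), and the factor $\mathrm{e}^{st}$ forces the arc integral to zero for $t>0$ by a Jordan-type estimate.

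For the analytic properties I would argue as follows. Real-valuedness follows from the reflection symmetry $\overline{M(\bar{s})}=M(s)$ on $\mathbb{C}\backslash(-\infty,0]$: the poles occur in conjugate pairs ${}_{P}s_{n}^{(+)}=\overline{{}_{P}s_{n}^{(-)}}$, so their residues are complex conjugates and sum to a real quantity, while the two banks of the cut carry conjugate integrands, making the bracket in (\ref{P1}) real. Smoothness for $t>0$ comes from uniform convergence on compact subsets of $(0,\infty)$: since $\func{Re}({}_{P}s_{n}^{(\pm)})\to-\infty$ like $-n\pi/(\ln n)^2$ by Proposition \ref{levo}, the residue series and all its $t$-derivatives converge uniformly on $[\varepsilon,\infty)$, and the branch integral is smooth in $t>0$ because $\mathrm{e}^{-qt}$ supplies exponential decay. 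That $P=0$ for $t<0$ is immediate upon closing the contour to the right, where $\tilde{P}$ is analytic and $\mathrm{e}^{st}\to0$.

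Finally I would pass from $P$ to $u_H=\Upsilon_0 H\ast P$, which for $t>0$ is $u_H(x,t)=\Upsilon_0\int_0^t P(x,\tau)\,\mathrm{d}\tau$. Integrating (\ref{P1}) in $\tau$ term by term, the cut integral produces $\int_0^t\mathrm{e}^{-q\tau}\mathrm{d}\tau=(1-\mathrm{e}^{-qt})/q$ and the residue series integrates directly, yielding (\ref{uha}); the interchange of $\int_0^t$ with the sum and the improper integral is justified by the same uniform bounds. Continuity at $t=0$ then follows because each ingredient of (\ref{uha}) vanishes as $t\to0^+$, so $u_H(x,0^+)=0$ matches $u_H=0$ for $t<0$. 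The main obstacle I anticipate is the rigorous control of the large arcs—verifying that the radii $R_n$ genuinely separate the poles and that $1/\sinh(sM(s))$ is uniformly bounded there—and, relatedly, securing local integrability of $P$ near $t=0$, where the per-term decay degrades and the crude bound $\int_0^1\mathrm{e}^{\func{Re}({}_{P}s_{n})t}\,\mathrm{d}t\sim(\ln n)^2/n$ is not summable, so a finer, cancellation-aware estimate (or the detour through continuity of $u_H$) is needed.
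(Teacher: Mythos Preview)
Your proposal follows essentially the same strategy as the paper: Bromwich inversion, deformation onto a Hankel-type contour around the branch cut of $M$, residue theorem picking up the poles ${}_{P}s_n^{(\pm)}$, conjugate symmetry for real-valuedness, closing to the right for $t<0$, and term-by-term integration of $P$ to obtain $u_H$. The paper also deduces local integrability of $P$ precisely through the ``detour via continuity of $u_H$'' that you anticipate at the end.

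The one noteworthy difference is in how the large contour is handled. You propose threading the radii $R_n$ between consecutive poles so as to keep $|\sinh(sM(s))|$ bounded below on the full arc, and you correctly flag this as the main technical obstacle. The paper avoids this device by splitting the large contour differently: two horizontal segments $\Gamma_1,\Gamma_6$ at height $\pm R$ running from $\func{Re}s=s_0$ to $\func{Re}s=0$, followed by arcs $\Gamma_2,\Gamma_5$ only in the closed left half-plane. On the horizontal pieces the paper uses Proposition~\ref{propP}(iii) to compute $M(p\pm\mathrm{i}R)\sim v\pm\mathrm{i}w$ explicitly and shows directly that $\left|\tilde P(x,p\pm\mathrm{i}R)\right|\le \mathrm{e}^{-(1-x)(pv-Rw)}\dfrac{1+\mathrm{e}^{-2x(pv-Rw)}}{|1-\mathrm{e}^{-2(pv-Rw)}|}\to 0$ because $pv-Rw\to\infty$; no pole-separation argument is needed there. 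On the left-half-plane arcs the paper relies on $M\sim\sqrt{a/b}$ and the Jordan-type factor $\mathrm{e}^{Rt\cos\phi}$ with $\cos\phi\le 0$. This decomposition buys a cleaner estimate on the part of the contour where $\mathrm{e}^{st}$ is large, at the cost of a slightly more elaborate contour; your sector contour is more symmetric but pushes the difficulty into the lower bound on $|\sinh(sM(s))|$ along the threading circles.
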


\begin{proof}
We calculate $P\left( x,t\right) ,$ $x\in \left[ 0,1\right] ,$ $t\in
%TCIMACRO{\U{211d} }%
%BeginExpansion
\mathbb{R}
%EndExpansion
,$ by the integration over a suitable contour.

Let $t>0.$ The Cauchy residues theorem yields
\begin{equation}
\oint\nolimits_{\Gamma }\tilde{P}\left( x,s\right) \mathrm{e}^{st}\mathrm{d}%
s=2\pi \mathrm{i}\sum_{n=1}^{\infty }\left[ \func{Res}\left( \tilde{P}\left(
x,s\right) \mathrm{e}^{st},{}_{P}s_{n}^{\left( +\right) }\right) +\func{Res}%
\left( \tilde{P}\left( x,s\right) \mathrm{e}^{st},{}_{P}s_{n}^{\left(
-\right) }\right) \right] ,  \label{KF-P}
\end{equation}%
where $\Gamma =\Gamma _{1}\cup \Gamma _{2}\cup \Gamma _{3}\cup \Gamma
_{\varepsilon }\cup \Gamma _{4}\cup \Gamma _{5}\cup \Gamma _{6}\cup \gamma
_{0},$ so that all poles lie inside the contour $\Gamma $ (see figure \ref%
{fig-1}).
\begin{figure}[h]
\centering
\includegraphics[scale=0.45]{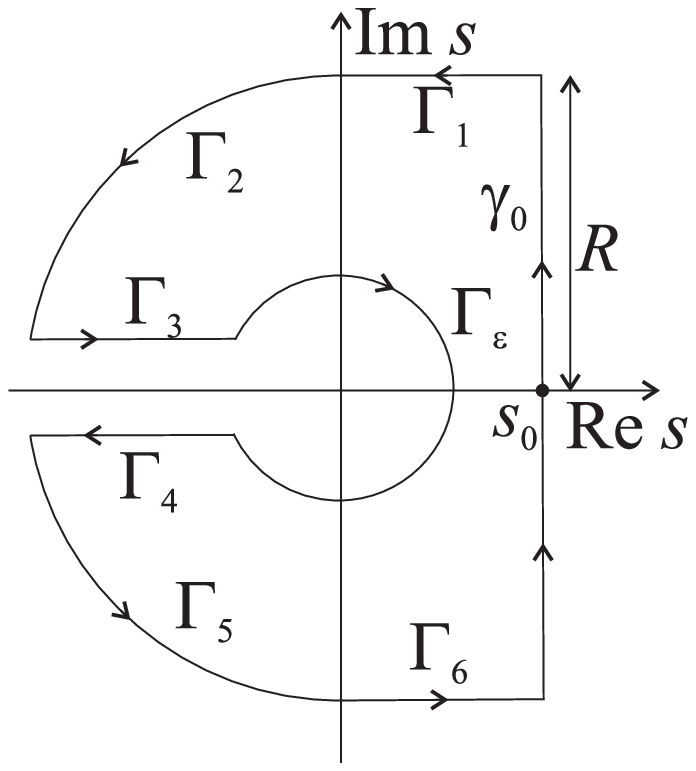}
\caption{Integration contour $\Gamma $}
\label{fig-1}
\end{figure}

First we show that the series of residues in (\ref{P1}) is convergent. By
proposition \ref{levo}, poles ${}_{P}s_{n}^{\left( \pm \right) }$ of $\tilde{%
P},$ given by (\ref{4-1}), are simple for $n>n_{0}$. Then residues in (\ref%
{KF-P}) can be calculated as it is given in (\ref{res-P}). We use (\ref{21-1}%
) to write (\ref{res-P}) as%
\begin{eqnarray}
&&\func{Res}\left( \tilde{P}\left( x,s\right) \mathrm{e}^{st},{}_{P}s_{n}^{%
\left( \pm \right) }\right)   \notag \\
&&\;\;\;\;=\left( -1\right) ^{n}\frac{\sin \left( n\pi x\right) }{n\pi }%
\left[ \frac{s\mathrm{e}^{st}}{1-\frac{\ln \frac{b}{a}}{2\ln \left(
as\right) \ln \left( bs\right) }+\frac{s\left( b-a\right) }{2\left(
as-1\right) \left( bs-1\right) }}\right] _{s={}_{P}s_{n}^{\left( \pm \right)
}},\;\;n>n_{0}.  \notag \\
&&  \label{rez}
\end{eqnarray}%
Let ${}_{P}s_{n}^{\left( \pm \right) }=R\mathrm{e}^{\pm \mathrm{i}\phi },$
then (\ref{rez}) transforms into
\begin{eqnarray*}
&&\func{Res}\left( \tilde{P}\left( x,s\right) \mathrm{e}^{st},{}_{P}s_{n}^{%
\left( \pm \right) }\right)  \\
&&\;\;\;\;=\left( -1\right) ^{n}\frac{\sin \left( n\pi x\right) }{n\pi }%
\frac{R\mathrm{e}^{Rt\cos \phi }\mathrm{e}^{\pm \mathrm{i}\left( \phi
+Rt\sin \phi \right) }}{\left[ 1-\frac{\ln \frac{b}{a}}{2\ln \left(
as\right) \ln \left( bs\right) }+\frac{s\left( b-a\right) }{2\left(
as-1\right) \left( bs-1\right) }\right] _{s=R\mathrm{e}^{\pm \mathrm{i}\phi
}}} \\
&&\;\;\;\;=\left( -1\right) ^{n}\frac{\sin \left( n\pi x\right) }{n\pi }%
\frac{R\mathrm{e}^{Rt\cos \phi }\left[ \cos \left( \phi +Rt\sin \phi \right)
\pm \mathrm{i\sin }\left( \phi +Rt\sin \phi \right) \right] }{\left[ 1-\frac{%
\ln \frac{b}{a}}{2\ln \left( as\right) \ln \left( bs\right) }+\frac{s\left(
b-a\right) }{2\left( as-1\right) \left( bs-1\right) }\right] _{s=R\mathrm{e}%
^{\pm \mathrm{i}\phi }}},
\end{eqnarray*}%
and therefore, for $n>n_{0},$ we have%
\begin{eqnarray}
&&\func{Res}\left( \tilde{P}\left( x,s\right) \mathrm{e}^{st},{}_{P}s_{n}^{%
\left( +\right) }\right) +\func{Res}\left( \tilde{P}\left( x,s\right)
\mathrm{e}^{st},{}_{P}s_{n}^{\left( -\right) }\right)
\begin{tabular}{l}
=%
\end{tabular}
\notag \\
&&\;\;\;\;\left( -1\right) ^{n}\frac{\sin \left( n\pi x\right) }{n\pi }R%
\mathrm{e}^{Rt\cos \phi }  \notag \\
&&\;\;\;\;\times \left( \frac{\cos \left( \phi +Rt\sin \phi \right) +\mathrm{%
i\sin }\left( \phi +Rt\sin \phi \right) }{\left[ 1-\frac{\ln \frac{b}{a}}{%
2\ln \left( as\right) \ln \left( bs\right) }+\frac{s\left( b-a\right) }{%
2\left( as-1\right) \left( bs-1\right) }\right] _{s=R\mathrm{e}^{\mathrm{i}%
\phi }}}\right.   \notag \\
&&\;\;\;\;\left. +\frac{\cos \left( \phi +Rt\sin \phi \right) -\mathrm{i\sin
}\left( \phi +Rt\sin \phi \right) }{\left[ 1-\frac{\ln \frac{b}{a}}{2\ln
\left( as\right) \ln \left( bs\right) }+\frac{s\left( b-a\right) }{2\left(
as-1\right) \left( bs-1\right) }\right] _{s=R\mathrm{e}^{-\mathrm{i}\phi }}}%
\right) .  \label{oc}
\end{eqnarray}%
Let $n\rightarrow \infty $ (then also $\left\vert {}_{P}s_{n}^{\left( \pm
\right) }\right\vert \rightarrow \infty ,$ i.e. $R\rightarrow \infty $). Then%
\begin{equation*}
\left\vert \left[ 1-\frac{\ln \frac{b}{a}}{2\ln \left( as\right) \ln \left(
bs\right) }+\frac{s\left( b-a\right) }{2\left( as-1\right) \left(
bs-1\right) }\right] _{s=R\mathrm{e}^{\pm \mathrm{i}\phi }}\right\vert
\rightarrow 1,
\end{equation*}%
and (\ref{oc}), as $n\rightarrow \infty ,$ becomes
\begin{eqnarray*}
&&\left\vert \func{Res}\left( \tilde{P}\left( x,s\right) \mathrm{e}%
^{st},{}_{P}s_{n}^{\left( +\right) }\right) +\func{Res}\left( \tilde{P}%
\left( x,s\right) \mathrm{e}^{st},{}_{P}s_{n}^{\left( -\right) }\right)
\right\vert  \\
&&\;\;\;\;\approx 2\left\vert \frac{\sin \left( n\pi x\right) }{n\pi }R%
\mathrm{e}^{Rt\cos \phi }\cos \left( \phi +Rt\sin \phi \right) \right\vert .
\end{eqnarray*}%
Proposition \ref{levo}, (\ref{Re s}), implies that%
\begin{equation*}
\func{Re}\left( {}_{P}s_{n}^{\left( \pm \right) }\right) \approx -\frac{\pi
}{4}\ln \frac{b}{a}\sqrt{\frac{b}{a}}\pi \frac{n}{\ln \left( \sqrt{ab}n\pi
\right) \ln \left( b\sqrt{\frac{b}{a}}n\pi \right) }\leq -C\sqrt{n}%
,\;\;n>n_{0}.
\end{equation*}%
Also, by (\ref{Im s}), we have that $\frac{R}{n}\approx \sqrt{\frac{b}{a}}%
\pi $. This implies that summands in (\ref{KF-P}) can be estimated by $K%
\mathrm{e}^{-Ct\sqrt{n}},$ which implies the convergence of the sum of
residues in (\ref{KF-P}).

Second, we calculate the integral over $\Gamma $ in (\ref{KF-P}). Consider
the integral along contour $\Gamma _{1}.$ Then
\begin{equation*}
\left\vert \int\nolimits_{\Gamma _{1}}\tilde{P}\left( x,s\right) \mathrm{e}%
^{st}\mathrm{d}s\right\vert \leq \int_{0}^{s_{0}}\left\vert \tilde{P}\left(
x,p+\mathrm{i}R\right) \right\vert \left\vert \mathrm{e}^{\left( p+\mathrm{i}%
R\right) t}\right\vert \mathrm{d}p.
\end{equation*}%
Let $R\rightarrow \infty .$ In order to estimate $\left\vert \tilde{P}\left(
x,p\pm \mathrm{i}R\right) \right\vert ,$ using (iii) of proposition \ref%
{propP}, we write
\begin{eqnarray*}
&&M\left( p\pm \mathrm{i}R\right) \sim v\pm \mathrm{i}w, \\
&&v=\sqrt{\frac{a}{b}}\frac{1}{\ln \left( aR\right) }\frac{\ln \left(
aR\right) \ln \left( bR\right) }{\sqrt[4]{\left( \ln \left( aR\right) \ln
\left( bR\right) \right) ^{2}+\left( \frac{\pi }{2}\ln \frac{b}{a}\right)
^{2}}}, \\
&&w=-\sqrt{\frac{a}{b}}\frac{1}{\ln \left( aR\right) }\frac{\frac{\pi }{2}%
\ln \frac{b}{a}}{\sqrt[4]{\left( \ln \left( aR\right) \ln \left( bR\right)
\right) ^{2}+\left( \frac{\pi }{2}\ln \frac{b}{a}\right) ^{2}}}.
\end{eqnarray*}%
Then, as $R\rightarrow \infty ,$%
\begin{eqnarray}
\left\vert \tilde{P}\left( x,p\pm \mathrm{i}R\right) \right\vert &\sim
&\left\vert \frac{\sinh \left[ x\left( pv-Rw\right) \pm \mathrm{i}x\left(
pw+Rv\right) \right] }{\sinh \left[ \left( pv-Rw\right) \pm \mathrm{i}\left(
pw+Rv\right) \right] }\right\vert  \notag \\
&\leq &\frac{\mathrm{e}^{x\left( pv-Rw\right) }+\mathrm{e}^{-x\left(
pv-Rw\right) }}{\left\vert \mathrm{e}^{pv-Rw}-\mathrm{e}^{-\left(
pv-Rw\right) }\right\vert }  \notag \\
&=&\mathrm{e}^{-\left( 1-x\right) \left( pv-Rw\right) }\frac{1+\mathrm{e}%
^{-2x\left( pv-Rw\right) }}{\left\vert 1-\mathrm{e}^{-2\left( pv-Rw\right)
}\right\vert }\rightarrow 0.  \label{pt-je-nula}
\end{eqnarray}%
The previous statement is valid since, as$\;\;R\rightarrow \infty ,$
\begin{eqnarray*}
pv-Rw &=&\sqrt{\frac{a}{b}}\frac{1}{\ln \left( aR\right) }\frac{1}{\sqrt[4]{%
\left( \ln \left( aR\right) \ln \left( bR\right) \right) ^{2}+\left( \frac{%
\pi }{2}\ln \frac{b}{a}\right) ^{2}}} \\
&&\times \left( p\ln \left( aR\right) \ln \left( bR\right) +R\frac{\pi }{2}%
\ln \frac{b}{a}\right) \\
&\sim &\sqrt{\frac{a}{b}}\left( p\sqrt{\frac{\ln \left( bR\right) }{\ln
\left( aR\right) }}+\frac{\pi }{2}\ln \frac{b}{a}\frac{R}{\ln \left(
aR\right) \sqrt{\ln \left( aR\right) \ln \left( bR\right) }}\right)
\rightarrow \infty .
\end{eqnarray*}%
Therefore, according to (\ref{pt-je-nula}), we have%
\begin{equation*}
\lim_{R\rightarrow \infty }\left\vert \int\nolimits_{\Gamma _{1}}\tilde{P}%
\left( x,s\right) \mathrm{e}^{st}\mathrm{d}s\right\vert =0.
\end{equation*}%
By the use of (\ref{pt-je-nula}), we conclude that similar arguments are
valid for the integral along the contour $\Gamma _{6}.$ Thus,%
\begin{equation*}
\lim\limits_{R\rightarrow \infty }\left\vert \int\nolimits_{\Gamma _{6}}%
\tilde{P}\left( x,s\right) \mathrm{e}^{st}\mathrm{d}s\right\vert =0.
\end{equation*}%
Next, we consider the integral along contour $\Gamma _{2}$
\begin{equation*}
\left\vert \int\nolimits_{\Gamma _{2}}\tilde{P}\left( x,s\right) \mathrm{e}%
^{st}\mathrm{d}s\right\vert \leq \int\nolimits_{\frac{\pi }{2}}^{\pi
}R\left\vert \mathrm{e}^{R\left( 1-x\right) \mathrm{e}^{\mathrm{i}\phi
}M\left( R\mathrm{e}^{\mathrm{i}\phi }\right) }\right\vert \left\vert \frac{%
\mathrm{e}^{2xR\mathrm{e}^{\mathrm{i}\phi }M\left( R\mathrm{e}^{\mathrm{i}%
\phi }\right) }-1}{\mathrm{e}^{2R\mathrm{e}^{\mathrm{i}\phi }M\left( R%
\mathrm{e}^{\mathrm{i}\phi }\right) }-1}\right\vert \mathrm{e}^{Rt\cos \phi }%
\mathrm{d}\phi .
\end{equation*}%
Since $M\sim \sqrt{\frac{a}{b}}$ as $\left\vert s\right\vert \rightarrow
\infty $ and $\cos \phi \leq 0$ for $\phi \in \left[ \frac{\pi }{2},\pi %
\right] ,$ by the Lebesgue theorem, we have%
\begin{equation*}
\lim_{R\rightarrow \infty }\left\vert \int\nolimits_{\Gamma _{2}}\tilde{P}%
\left( x,s\right) \mathrm{e}^{st}\mathrm{d}s\right\vert \leq
\lim_{R\rightarrow \infty }\int\nolimits_{\frac{\pi }{2}}^{\pi }R\,\mathrm{e}%
^{R\cos \phi \left( t+\left( 1-x\right) \sqrt{\frac{a}{b}}\right) }\mathrm{d}%
\phi =0.
\end{equation*}%
Similar arguments are valid for the integral along the contour $\Gamma _{5}.$
Thus,%
\begin{equation*}
\lim\limits_{R\rightarrow \infty }\left\vert \int\nolimits_{\Gamma _{5}}%
\tilde{P}\left( x,s\right) \mathrm{e}^{st}\mathrm{d}s\right\vert =0.
\end{equation*}%
The integration along contour $\Gamma _{\varepsilon }$ gives%
\begin{eqnarray*}
\lim_{\varepsilon \rightarrow 0}\left\vert \int\nolimits_{\Gamma
_{\varepsilon }}\tilde{P}\left( x,s\right) \mathrm{e}^{st}\mathrm{d}%
s\right\vert &=&\lim_{\varepsilon \rightarrow 0}\int\nolimits_{\pi }^{-\pi
}\varepsilon \left\vert \mathrm{e}^{-\varepsilon \left( 1-x\right) \mathrm{e}%
^{\mathrm{i}\phi }M\left( \varepsilon \mathrm{e}^{\mathrm{i}\phi }\right)
}\right\vert \\
&&\times \left\vert \frac{1-\mathrm{e}^{-2x\varepsilon \mathrm{e}^{\mathrm{i}%
\phi }M\left( \varepsilon \mathrm{e}^{\mathrm{i}\phi }\right) }}{1-\mathrm{e}%
^{-2\varepsilon \mathrm{e}^{\mathrm{i}\phi }M\left( \varepsilon \mathrm{e}^{%
\mathrm{i}\phi }\right) }}\right\vert \mathrm{e}^{\varepsilon t\cos \phi }%
\mathrm{d}\phi
\begin{tabular}{l}
=%
\end{tabular}%
0.
\end{eqnarray*}%
Integrals along parts of contour $\Gamma _{3},$ $\Gamma _{4}$ and $\gamma
_{0}$ give%
\begin{eqnarray*}
\lim_{\substack{ R\rightarrow \infty  \\ \varepsilon \rightarrow 0}}%
\int\nolimits_{\Gamma _{3}}\tilde{P}\left( x,s\right) \mathrm{e}^{st}\mathrm{%
d}s &=&\int\nolimits_{0}^{\infty }\frac{\sinh \left( xqM\left( q\mathrm{e}^{%
\mathrm{i}\pi }\right) \right) }{\sinh \left( qM\left( q\mathrm{e}^{\mathrm{i%
}\pi }\right) \right) }\mathrm{e}^{-qt}\mathrm{d}q, \\
\lim_{\substack{ R\rightarrow \infty  \\ \varepsilon \rightarrow 0}}%
\int\nolimits_{\Gamma _{4}}\tilde{P}\left( x,s\right) \mathrm{e}^{st}\mathrm{%
d}s &=&-\int\nolimits_{0}^{\infty }\frac{\sinh \left( xqM\left( q\mathrm{e}%
^{-\mathrm{i}\pi }\right) \right) }{\sinh \left( qM\left( q\mathrm{e}^{-%
\mathrm{i}\pi }\right) \right) }\mathrm{e}^{-qt}\mathrm{d}q, \\
\lim_{R\rightarrow \infty }\int\nolimits_{\gamma _{0}}\tilde{P}\left(
x,s\right) \mathrm{e}^{st}\mathrm{d}s &=&2\pi \mathrm{i}P\left( x,t\right) .
\end{eqnarray*}%
Now, by the Cauchy residues theorem (\ref{KF-P}), the function $P$ is
determined by (\ref{P1}).

In order to see that $P$ is a real-valued function, we use $M\left( q\mathrm{%
e}^{\pm \mathrm{i}\pi }\right) ,$ $q\in \left[ 0,\infty \right) ,$ and note
that $M\left( q\mathrm{e}^{-\mathrm{i}\pi }\right) =\overline{M\left( q%
\mathrm{e}^{\mathrm{i}\pi }\right) },$ where the bar denotes the complex
conjugation. Due to the exponential in the hyperbolic sine, we have $\sinh
\left( xqM\left( q\mathrm{e}^{-\mathrm{i}\pi }\right) \right) =\overline{%
\sinh \left( xqM\left( q\mathrm{e}^{\mathrm{i}\pi }\right) \right) }$ and
therefore the integrand in (\ref{P1}) is of the form%
\begin{eqnarray*}
&&\frac{\sinh \left( xqM\left( q\mathrm{e}^{-\mathrm{i}\pi }\right) \right)
}{\sinh \left( qM\left( q\mathrm{e}^{-\mathrm{i}\pi }\right) \right) }-\frac{%
\sinh \left( xqM\left( q\mathrm{e}^{\mathrm{i}\pi }\right) \right) }{\sinh
\left( qM\left( q\mathrm{e}^{\mathrm{i}\pi }\right) \right) } \\
&&\;\;\;\;=\overline{\left( \frac{\sinh \left( xqM\left( q\mathrm{e}^{%
\mathrm{i}\pi }\right) \right) }{\sinh \left( qM\left( q\mathrm{e}^{\mathrm{i%
}\pi }\right) \right) }\right) }-\frac{\sinh \left( xqM\left( q\mathrm{e}^{%
\mathrm{i}\pi }\right) \right) }{\sinh \left( qM\left( q\mathrm{e}^{\mathrm{i%
}\pi }\right) \right) } \\
&&\;\;\;\;=-2\mathrm{i\func{Im}}\left( \frac{\sinh \left( xqM\left( q\mathrm{%
e}^{\mathrm{i}\pi }\right) \right) }{\sinh \left( qM\left( q\mathrm{e}^{%
\mathrm{i}\pi }\right) \right) }\right) ,
\end{eqnarray*}%
which implies that the first term in (\ref{P1}) is real. \newline
Next, we examine $\func{Res}\left( \tilde{P}\left( x,s\right) \mathrm{e}%
^{st},{}_{P}s_{n}^{\left( \pm \right) }\right) $ in order to prove that the
sum of residues is also real. By (\ref{rez}) and
\begin{eqnarray*}
&&\left[ 1-\frac{\ln \frac{b}{a}}{2\ln \left( as\right) \ln \left( bs\right)
}+\frac{\left( b-a\right) s}{2\left( as-1\right) \left( bs-1\right) }\right]
_{s={}_{P}s_{n}^{\left( -\right) }} \\
&&\;\;\;\;=\overline{\left( \left[ 1-\frac{\ln \frac{b}{a}}{2\ln \left(
as\right) \ln \left( bs\right) }+\frac{\left( b-a\right) s}{2\left(
as-1\right) \left( bs-1\right) }\right] _{s={}_{P}s_{n}^{\left( +\right)
}}\right) }
\end{eqnarray*}%
we obtain that $\func{Res}\left( \tilde{P}\left( x,s\right) \mathrm{e}%
^{st},{}_{P}s_{n}^{\left( -\right) }\right) =\overline{\func{Res}\left(
\tilde{P}\left( x,s\right) \mathrm{e}^{st},{}_{P}s_{n}^{\left( +\right)
}\right) }.$ It is clear that%
\begin{equation*}
\func{Res}\left( \tilde{P}\left( x,s\right) \mathrm{e}^{st},{}_{P}s_{n}^{%
\left( +\right) }\right) +\func{Res}\left( \tilde{P}\left( x,s\right)
\mathrm{e}^{st},{}_{P}s_{n}^{\left( -\right) }\right) =2\func{Re}\left(
\func{Res}\left( \tilde{P}\left( x,s\right) \mathrm{e}^{st},{}_{P}s_{n}^{%
\left( +\right) }\right) \right) .
\end{equation*}%
This implies that the second term in (\ref{P1}) is also real for $n\in
%TCIMACRO{\U{2115} }%
%BeginExpansion
\mathbb{N}
%EndExpansion
.$ Hence, (\ref{P1}) is a real-valued function.

Let $t<0.$ We prove that the integral over $\gamma _{0}$ does not depend on
the choice of $s_{0}$ (see figure \ref{fig-1}). Let $\bar{\Gamma}=\gamma
_{0}\cup \gamma _{1}\cup \gamma _{0}^{\prime }\cup \gamma _{2}$ (see figure %
\ref{fig-1-1}), where $s_{0}$ and $s_{0}^{\prime }$ are chosen so that all
poles, i.e. solutions of (\ref{21-1}), lie on the left of $\gamma _{0}$. The
Cauchy residues theorem yields ($x\in \left[ 0,1\right] $)%
\begin{equation*}
\oint\nolimits_{\bar{\Gamma}}\tilde{P}\left( x,s\right) \mathrm{e}^{st}%
\mathrm{d}s=0.
\end{equation*}%
This and (\ref{pt-je-nula}) imply%
\begin{equation*}
\lim_{R\rightarrow \infty }\left\vert \int_{\gamma _{1}}\tilde{P}\left(
x,s\right) \mathrm{e}^{st}\mathrm{d}s\right\vert \leq \lim_{R\rightarrow
\infty }\int_{s_{0}}^{s_{0}^{\prime }}\left\vert \tilde{P}\left( x,v+\mathrm{%
i}R\right) \right\vert \left\vert \mathrm{e}^{\left( v+\mathrm{i}R\right)
t}\right\vert \mathrm{d}v=0.
\end{equation*}%
Similar arguments hold for the integral along $\gamma _{2}.$ Therefore, by
the Cauchy residues theorem, integrals along $\gamma _{0}$ and $\gamma
_{0}^{\prime }$ are equal and the inversion of the Laplace transformation
does not depend on the choice of $s_{0}$ as well as on the choice of $%
s_{0}^{\prime }.$

The Cauchy residues theorem yields ($x\in \left[ 0,1\right] $)
\begin{equation*}
\oint\nolimits_{\tilde{\Gamma}}\tilde{P}\left( x,s\right) \mathrm{e}^{st}%
\mathrm{d}s=0,
\end{equation*}%
where $\tilde{\Gamma}=\gamma _{0}\cup \Gamma _{r}$ (see figure \ref{fig-1-2}%
), with the assumption that all poles, i.e. solutions of (\ref{21-1}), lie
on the left of $\gamma _{0}$.
\begin{figure}[h]
\begin{minipage}{60mm}
\centering
\includegraphics[scale=0.45]{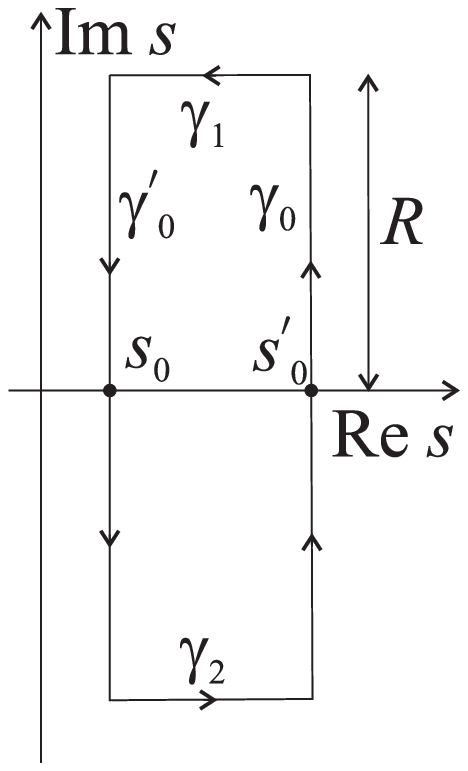}
\caption{Integration contour $\bar{\Gamma}$} \label{fig-1-1}
\label{fig:3}
\end{minipage}
\hfil
\begin{minipage}{60mm}
\centering
\includegraphics[scale=0.45]{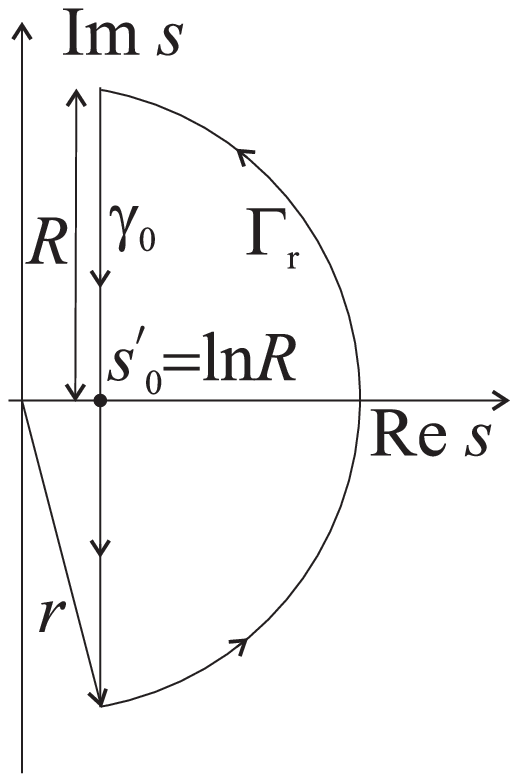}
\caption{Integration contour $\tilde{\Gamma}$} \label{fig-1-2}
\end{minipage}
\end{figure}
Let $r\left( R\right) =\sqrt{R^{2}+s_{0}^{2}}.$ Consider%
\begin{equation*}
\left\vert \int_{\Gamma _{r}}\tilde{P}\left( x,s\right) \mathrm{e}^{st}%
\mathrm{d}s\right\vert \leq \int_{-\phi _{0}\left( r\left( R\right) \right)
}^{\phi _{0}\left( r\left( R\right) \right) }\sqrt{R^{2}+s_{0}^{2}}%
\left\vert \tilde{P}\left( x,\sqrt{R^{2}+s_{0}^{2}}\mathrm{e}^{\mathrm{i}%
\phi }\right) \right\vert \mathrm{e}^{t\sqrt{R^{2}+s_{0}^{2}}\cos \phi }%
\mathrm{d}\phi ,
\end{equation*}%
where $\lim\limits_{R\rightarrow \infty }\phi _{0}\left( r\left( R\right)
\right) =\frac{\pi }{2}.$ Since $M\sim \sqrt{\frac{a}{b}}$ as $\left\vert
s\right\vert \rightarrow \infty ,$ by (\ref{4-1}) and (\ref{M}), we have, as
$\left\vert s\right\vert \rightarrow \infty ,$%
\begin{equation}
\left\vert \tilde{P}\left( x,s\right) \right\vert =\left\vert \mathrm{e}%
^{-\left( 1-x\right) sM\left( s\right) }\frac{1-\mathrm{e}^{-2xsM\left(
s\right) }}{1-\mathrm{e}^{-2sM\left( s\right) }}\right\vert \leq C,\;\;x\in %
\left[ 0,1\right] ,\;s\in
%TCIMACRO{\U{2102} }%
%BeginExpansion
\mathbb{C}
%EndExpansion
\backslash \left( -\infty ,0\right] .  \label{p-besk}
\end{equation}%
By (\ref{p-besk}), we have%
\begin{equation*}
\lim_{R\rightarrow \infty }\left\vert \int_{\Gamma _{r}}\tilde{P}\left(
x,s\right) \mathrm{e}^{st}\mathrm{d}s\right\vert \leq C\lim_{R\rightarrow
\infty }\int_{-\phi _{0}\left( r\left( R\right) \right) }^{\phi _{0}\left(
r\left( R\right) \right) }\sqrt{R^{2}+s_{0}^{2}}\mathrm{e}^{t\sqrt{%
R^{2}+s_{0}^{2}}\cos \phi }\mathrm{d}\phi =0,
\end{equation*}%
since $t<0$ and $\cos \phi >0.$ Therefore, we proved (\ref{uhtm0}).

By the use of (\ref{P1}) and (\ref{uhtm0}) in (\ref{u-h}) and by calculating
the convolution we obtain (\ref{uha}) and (\ref{uha0}).

In order to prove that $u_{H}$ is a continuous function at $t=0$, we will
use Lebesgue dominated convergence theorem. Let
\begin{equation*}
f\left( q,x\right) :=\frac{\Upsilon _{0}}{2\pi \mathrm{i}}\left( \frac{\sinh
\left( xqM\left( q\mathrm{e}^{-\mathrm{i}\pi }\right) \right) }{\sinh \left(
qM\left( q\mathrm{e}^{-\mathrm{i}\pi }\right) \right) }-\frac{\sinh \left(
xqM\left( q\mathrm{e}^{\mathrm{i}\pi }\right) \right) }{\sinh \left(
qM\left( q\mathrm{e}^{\mathrm{i}\pi }\right) \right) }\right) ,\;\;q\in
\left( 0,\infty \right) ,\;x\in \left[ 0,1\right] .
\end{equation*}%
Then
\begin{equation}
\left\vert \dint\nolimits_{0}^{\infty }f\left( q,x\right) \frac{1-\mathrm{e}%
^{-qt}}{q}\mathrm{d}q\right\vert \rightarrow 0\;\;\text{as}\;\;t\rightarrow
0.  \label{zvez}
\end{equation}%
By simple calculations we have $\frac{1-\mathrm{e}^{-qt}}{q}\leq Ct$ if $%
0<q<1$ and $\frac{1-\mathrm{e}^{-qt}}{q}\leq 1-\mathrm{e}^{-qt}$ if $q\geq
1. $ Thus%
\begin{equation*}
f\left( q,x\right) \frac{1-\mathrm{e}^{-qt}}{q}\leq Cf\left( q,x\right)
,\;\;q>0
\end{equation*}%
and since $\frac{1-\mathrm{e}^{-qt}}{q}\rightarrow 0$ as$\ t\rightarrow 0,$ (%
\ref{zvez}) follows.

In proving the continuity of $u_{H}$ at $t=0,$ by (\ref{u-h}), we estimated%
\begin{equation*}
\int_{0}^{t}P\left( x,\tau \right) \mathrm{d}\tau ,\;\;x\in \left[ 0,1\right]
,\;t>0,
\end{equation*}%
and actually proved that $P$ is integrable function on any interval $\left[
0,T\right] ,$ $T>0.$ Thus, $P$ is locally integrable on $%
%TCIMACRO{\U{211d} }%
%BeginExpansion
\mathbb{R}
%EndExpansion
.$
\end{proof}

\subsubsection{Case $\Upsilon =\Upsilon _{0}H+F$\label{hplusnesto}}

\begin{condition}
\label{cond}Let $F$ be a locally integrable function, equal to zero for $%
t\leq 0,$ such that its Laplace transformation exists in $%
%TCIMACRO{\U{2102} }%
%BeginExpansion
\mathbb{C}
%EndExpansion
\backslash \left( -\infty ,0\right] .$ Assume:

\begin{enumerate}
\item[(i)] $\tilde{F}$ is analytic and $\tilde{F}\neq 0$ in $%
%TCIMACRO{\U{2102} }%
%BeginExpansion
\mathbb{C}
%EndExpansion
\backslash \left( -\infty ,0\right] ;$

\item[(ii)] for some $\alpha >1,$ $\tilde{F}\left( s\right) \sim \frac{1}{%
\left\vert s\right\vert ^{\alpha }},$ $s\in
%TCIMACRO{\U{2102} }%
%BeginExpansion
\mathbb{C}
%EndExpansion
\backslash \left( -\infty ,0\right] ,$ as $\left\vert s\right\vert
\rightarrow \infty ;$

\item[(iii)] $s\tilde{F}\left( s\right) \sim o\left( 1\right) ,$ $s\in
%TCIMACRO{\U{2102} }%
%BeginExpansion
\mathbb{C}
%EndExpansion
\backslash \left( -\infty ,0\right] ,$ as $\left\vert s\right\vert
\rightarrow 0.$
\end{enumerate}
\end{condition}

If the boundary condition (\ref{BC-u-bd}) is given by (\ref{6}), then the
solution to (\ref{sys-bd}), (\ref{IC-bd}), (\ref{BC-u-bd}), given by (\ref%
{4-0}) in the Laplace domain, reads formally%
\begin{equation*}
\tilde{u}\left( x,s\right) =\tilde{u}_{H}\left( x,s\right) +\tilde{F}\left(
s\right) \tilde{P}\left( x,s\right) ,\;\;x\in \left[ 0,1\right] ,\;s\in
%TCIMACRO{\U{2102} }%
%BeginExpansion
\mathbb{C}
%EndExpansion
\backslash \left( -\infty ,0\right] ,
\end{equation*}%
and in the time domain it is%
\begin{equation*}
u\left( x,t\right) =u_{H}\left( x,t\right) +F\left( t\right) \ast P\left(
x,t\right) ,\;\;x\in \left[ 0,1\right] ,\;t\in
%TCIMACRO{\U{211d} }%
%BeginExpansion
\mathbb{R}
%EndExpansion
.
\end{equation*}%
The existence of $u_{H}$ is shown in \S \ref{H}, therefore it remains to
show the existence of%
\begin{equation*}
u_{F}\left( x,t\right) =F\left( t\right) \ast P\left( x,t\right) ,\;\;x\in
\left[ 0,1\right] ,\;t\in
%TCIMACRO{\U{211d} }%
%BeginExpansion
\mathbb{R}
%EndExpansion
.
\end{equation*}

Let $t>0.$ The Cauchy residues theorem yields ($x\in \left[ 0,1\right] $)
\begin{equation}
\oint\nolimits_{\Gamma }\tilde{u}_{F}\left( x,s\right) \mathrm{e}^{st}%
\mathrm{d}s=2\pi \mathrm{i}\sum_{n=1}^{\infty }\left[ \func{Res}\left(
\tilde{u}_{F}\left( x,s\right) \mathrm{e}^{st},{}_{P}s_{n}^{\left( +\right)
}\right) +\func{Res}\left( \tilde{u}_{F}\left( x,s\right) \mathrm{e}%
^{st},{}_{P}s_{n}^{\left( -\right) }\right) \right] ,  \label{KF-PF}
\end{equation}%
where $\Gamma =\Gamma _{1}\cup \Gamma _{2}\cup \Gamma _{3}\cup \Gamma
_{\varepsilon }\cup \Gamma _{4}\cup \Gamma _{5}\cup \Gamma _{6}\cup \gamma
_{0}$ (see figure \ref{fig-1}). Since poles ${}_{P}s_{n}^{\left( \pm \right)
}$ of $\tilde{u}_{F}$ are actually the poles of $\tilde{P},$ that are
obtained from (\ref{21-1}) and they are simple for $n>n_{0},$ the residues
in (\ref{KF-PF}) can be calculated as%
\begin{equation}
\func{Res}\left( \tilde{u}_{F}\left( x,s\right) \mathrm{e}%
^{st},{}_{P}s_{n}^{\left( \pm \right) }\right) =\left[ \tilde{F}\left(
s\right) \frac{\sinh \left( xsM\left( s\right) \right) }{\frac{\mathrm{d}}{%
\mathrm{d}s}\left[ \sinh \left( sM\left( s\right) \right) \right] }\mathrm{e}%
^{st}\right] _{s={}_{P}s_{n}^{\left( \pm \right) }}.  \label{res-PF}
\end{equation}%
The proof that the sum in (\ref{KF-PF}) converges is analog to the one
presented in \S \ref{H}.

Consider the integral along contour $\Gamma _{1}.$ It reads%
\begin{equation*}
\left\vert \int\nolimits_{\Gamma _{1}}\tilde{u}_{F}\left( x,s\right) \mathrm{%
e}^{st}\mathrm{d}s\right\vert \leq \int_{0}^{s_{0}}\left\vert \tilde{F}%
\left( p+\mathrm{i}R\right) \right\vert \left\vert \tilde{P}\left( x,p+%
\mathrm{i}R\right) \right\vert \left\vert \mathrm{e}^{\left( p+\mathrm{i}%
R\right) t}\right\vert \mathrm{d}p.
\end{equation*}%
According to (\ref{p-besk}) and condition \ref{cond}, we have%
\begin{equation*}
\lim_{R\rightarrow \infty }\left\vert \int\nolimits_{\Gamma _{1}}\tilde{u}%
_{F}\left( x,s\right) \mathrm{e}^{st}\mathrm{d}s\right\vert \leq
C\lim_{R\rightarrow \infty }\int_{0}^{s_{0}}\frac{1}{\left( \sqrt{p^{2}+R^{2}%
}\right) ^{\alpha }}\mathrm{e}^{pt}\mathrm{d}p=0.
\end{equation*}%
The integral along contour $\Gamma _{2}$ reads
\begin{eqnarray*}
\left\vert \int\nolimits_{\Gamma _{2}}\tilde{u}_{F}\left( x,s\right) \mathrm{%
e}^{st}\mathrm{d}s\right\vert &\leq &\int\nolimits_{\frac{\pi }{2}}^{\pi
}\left\vert \tilde{F}\left( R\mathrm{e}^{\mathrm{i}\phi }\right) \right\vert
\left\vert \mathrm{e}^{R\left( 1-x\right) \mathrm{e}^{\mathrm{i}\phi
}M\left( R\mathrm{e}^{\mathrm{i}\phi }\right) }\right\vert \\
&&\times \left\vert \frac{\mathrm{e}^{2xR\mathrm{e}^{\mathrm{i}\phi }M\left(
R\mathrm{e}^{\mathrm{i}\phi }\right) }-1}{\mathrm{e}^{2R\mathrm{e}^{\mathrm{i%
}\phi }M\left( R\mathrm{e}^{\mathrm{i}\phi }\right) }-1}\right\vert \mathrm{e%
}^{Rt\cos \phi }R\mathrm{d}\phi .
\end{eqnarray*}%
In order to apply the Lebesgue theorem, we need $\alpha >1$ in condition \ref%
{cond} (actually it is enough to have $\alpha \geq 1,$ but case $\alpha =1$
is already considered). Since $M\sim \sqrt{\frac{a}{b}}$ as $\left\vert
s\right\vert \rightarrow \infty $ and $\cos \phi \leq 0$ for $\phi \in \left[
\frac{\pi }{2},\pi \right] ,$ we have%
\begin{equation*}
\lim_{R\rightarrow \infty }\left\vert \int\nolimits_{\Gamma _{2}}\tilde{u}%
_{F}\left( x,s\right) \mathrm{e}^{st}\mathrm{d}s\right\vert \leq
C\lim_{R\rightarrow \infty }\int\nolimits_{\frac{\pi }{2}}^{\pi }R^{1-\alpha
}\mathrm{e}^{R\cos \phi \left( t+\left( 1-x\right) \sqrt{\frac{a}{b}}\right)
}\mathrm{d}\phi =0.
\end{equation*}%
Similar arguments are valid for the integral along the contour $\Gamma _{5}.$
Thus,%
\begin{equation*}
\lim\limits_{R\rightarrow \infty }\left\vert \int\nolimits_{\Gamma _{5}}%
\tilde{u}_{F}\left( x,s\right) \mathrm{e}^{st}\mathrm{d}s\right\vert =0.
\end{equation*}%
The integration along contour $\Gamma _{\varepsilon }$ gives%
\begin{eqnarray*}
\left\vert \int\nolimits_{\Gamma _{\varepsilon }}\tilde{u}_{F}\left(
x,s\right) \mathrm{e}^{st}\mathrm{d}s\right\vert &\leq &\int\nolimits_{-\pi
}^{\pi }\left\vert \tilde{F}\left( \varepsilon \mathrm{e}^{\mathrm{i}\phi
}\right) \right\vert \left\vert \mathrm{e}^{-\varepsilon \left( 1-x\right)
\mathrm{e}^{\mathrm{i}\phi }M\left( \varepsilon \mathrm{e}^{\mathrm{i}\phi
}\right) }\right\vert \\
&&\times \left\vert \frac{1-\mathrm{e}^{-2x\varepsilon \mathrm{e}^{\mathrm{i}%
\phi }M\left( \varepsilon \mathrm{e}^{\mathrm{i}\phi }\right) }}{1-\mathrm{e}%
^{-2\varepsilon \mathrm{e}^{\mathrm{i}\phi }M\left( \varepsilon \mathrm{e}^{%
\mathrm{i}\phi }\right) }}\right\vert \mathrm{e}^{\varepsilon t\cos \phi
}\varepsilon \mathrm{d}\phi ,
\end{eqnarray*}%
and this tends to zero as $\varepsilon \rightarrow 0,$ according to
condition \ref{cond}. Integrals along parts of contour $\Gamma _{3},$ $%
\Gamma _{4}$ and $\gamma _{0}$ give%
\begin{eqnarray*}
\lim_{\substack{ R\rightarrow \infty  \\ \varepsilon \rightarrow 0}}%
\int\nolimits_{\Gamma _{3}}\tilde{u}_{F}\left( x,s\right) \mathrm{e}^{st}%
\mathrm{d}s &=&\int\nolimits_{0}^{\infty }\tilde{F}\left( q\mathrm{e}^{%
\mathrm{i}\pi }\right) \frac{\sinh \left( xqM\left( q\mathrm{e}^{\mathrm{i}%
\pi }\right) \right) }{\sinh \left( qM\left( q\mathrm{e}^{\mathrm{i}\pi
}\right) \right) }\mathrm{e}^{-qt}\mathrm{d}q, \\
\lim_{\substack{ R\rightarrow \infty  \\ \varepsilon \rightarrow 0}}%
\int\nolimits_{\Gamma _{4}}\tilde{u}_{F}\left( x,s\right) \mathrm{e}^{st}%
\mathrm{d}s &=&-\int\nolimits_{0}^{\infty }\tilde{F}\left( q\mathrm{e}^{-%
\mathrm{i}\pi }\right) \frac{\sinh \left( xqM\left( q\mathrm{e}^{-\mathrm{i}%
\pi }\right) \right) }{\sinh \left( qM\left( q\mathrm{e}^{-\mathrm{i}\pi
}\right) \right) }\mathrm{e}^{-qt}\mathrm{d}q, \\
\lim_{R\rightarrow \infty }\int\nolimits_{\gamma _{0}}\tilde{u}_{F}\left(
x,s\right) \mathrm{e}^{st}\mathrm{d}s &=&2\pi \mathrm{i}u_{F}\left(
x,t\right) .
\end{eqnarray*}%
Now, by the Cauchy residues theorem (\ref{KF-PF}), $u_{F}$ is determined as%
\begin{eqnarray}
u_{F}\left( x,t\right) &=&\frac{1}{2\pi \mathrm{i}}\dint\nolimits_{0}^{%
\infty }\left( \tilde{F}\left( q\mathrm{e}^{-\mathrm{i}\pi }\right) \frac{%
\sinh \left( xqM\left( q\mathrm{e}^{-\mathrm{i}\pi }\right) \right) }{\sinh
\left( qM\left( q\mathrm{e}^{-\mathrm{i}\pi }\right) \right) }\right.  \notag
\\
&&\left. -\tilde{F}\left( q\mathrm{e}^{\mathrm{i}\pi }\right) \frac{\sinh
\left( xM\left( q\mathrm{e}^{\mathrm{i}\pi }\right) \right) }{\sinh \left(
qM\left( q\mathrm{e}^{\mathrm{i}\pi }\right) \right) }\right) \mathrm{e}%
^{-qt}\mathrm{d}q  \notag \\
&&+\sum_{n=1}^{\infty }\left[ \func{Res}\left( \tilde{u}_{F}\mathrm{e}%
^{st},{}_{P}s_{n}^{\left( +\right) }\right) +\func{Res}\left( \tilde{u}_{F}%
\mathrm{e}^{st},{}_{P}s_{n}^{\left( -\right) }\right) \right] ,\;\;x\in %
\left[ 0,1\right] ,\;t>0,  \notag \\
&&  \label{u-f} \\
u_{F}\left( x,t\right) &=&0,\;\;x\in \left[ 0,1\right] ,\;t<0,  \notag
\end{eqnarray}%
where the residues are given by (\ref{res-PF}). Note that $u_{F}$ is a
locally integrable, real-valued function, which can shown similarly as in \S %
\ref{H}.

Therefore, in the case when boundary condition takes the form (\ref{6}), the
solution to system (\ref{sys-bd}), (\ref{IC-bd}), (\ref{BC-u-bd}) reads%
\begin{equation*}
u\left( x,t\right) =u_{H}\left( x,t\right) +u_{F}\left( x,t\right) ,\;\;x\in
\left[ 0,1\right] ,\;t>0,
\end{equation*}%
where $u_{H}\ $and $u_{F}$ are given by (\ref{uha}) and (\ref{u-f}),
respectively. Note that $u_{H}\ $and $u_{F}$ are equal to zero for $t<0.$
Again, we have that $u$ is a smooth function for $x\in \left[ 0,1\right] ,$\
$t>0.$

\subsection{Determination of stress $\protect\sigma $ in case of stress
relaxation\label{dts}}

We see that $\tilde{T},$ given by (\ref{7-1}), has the branch point at $s=0$
and poles at the same points as $\tilde{P}.$ Therefore, the poles of $\tilde{%
T}$ are given as solutions to (\ref{21-1}). Using the Cauchy residues theorem%
\begin{equation}
\oint\nolimits_{\Gamma }\tilde{T}\left( x,s\right) \mathrm{e}^{st}\mathrm{d}%
s=2\pi \mathrm{i}\sum_{n=1}^{\infty }\left[ \func{Res}\left( \tilde{T}\left(
x,s\right) \mathrm{e}^{st},{}_{P}s_{n}^{\left( +\right) }\right) +\func{Res}%
\left( \tilde{T}\left( x,s\right) \mathrm{e}^{st},{}_{P}s_{n}^{\left(
-\right) }\right) \right] ,  \label{KF-T}
\end{equation}%
where contour $\Gamma $ is given in figure \ref{fig-1}, we obtain $T$ in the
following way. Residues in (\ref{KF-T}) are given by%
\begin{equation}
\func{Res}\left( \tilde{T}\left( x,s\right) \mathrm{e}^{st},{}_{P}s_{n}^{%
\left( \pm \right) }\right) =\left[ \frac{\cosh \left( xsM\left( s\right)
\right) }{M\left( s\right) \frac{\mathrm{d}}{\mathrm{d}s}\left[ \sinh \left(
sM\left( s\right) \right) \right] }\mathrm{e}^{st}\right] _{s={}_{P}s_{n}^{%
\left( \pm \right) }},  \label{res-T}
\end{equation}%
where ${}_{P}s_{n}^{\left( \pm \right) },$ $n\in
%TCIMACRO{\U{2115} }%
%BeginExpansion
\mathbb{N}
%EndExpansion
,$ are solutions of (\ref{21-1}).

Evaluating the integral at the left hand side of (\ref{KF-T}) in the same
way as in \S \ref{H}, we obtain%
\begin{eqnarray*}
T\left( x,t\right) &=&1+\frac{1}{2\pi \mathrm{i}}\dint\nolimits_{0}^{\infty
}\left( \frac{\cosh \left( xqM\left( q\mathrm{e}^{\mathrm{i}\pi }\right)
\right) }{M\left( q\mathrm{e}^{\mathrm{i}\pi }\right) \sinh \left( qM\left( q%
\mathrm{e}^{\mathrm{i}\pi }\right) \right) }\right. \\
&&\left. -\frac{\cosh \left( xqM\left( q\mathrm{e}^{-\mathrm{i}\pi }\right)
\right) }{M\left( q\mathrm{e}^{-\mathrm{i}\pi }\right) \sinh \left( qM\left(
q\mathrm{e}^{-\mathrm{i}\pi }\right) \right) }\right) \mathrm{e}^{-qt}%
\mathrm{d}q \\
&&+\sum_{n=1}^{\infty }\left[ \func{Res}\left( \tilde{T}\left( x,s\right)
\mathrm{e}^{st},{}_{P}s_{n}^{\left( +\right) }\right) \right. \\
&&\left. +\func{Res}\left( \tilde{T}\left( x,s\right) \mathrm{e}%
^{st},{}_{P}s_{n}^{\left( -\right) }\right) \right] ,\;\;x%
\begin{tabular}{l}
$\in $%
\end{tabular}%
\left[ 0,1\right] ,\;t%
\begin{tabular}{l}
\TEXTsymbol{>}%
\end{tabular}%
0, \\
T\left( x,t\right) &=&0,\;\;x\in \left[ 0,1\right] ,\;t<0,
\end{eqnarray*}%
where the residues are given by (\ref{res-T}). The proof is analogue to the
one presented in \S \ref{H}.

Thus, by (\ref{11}), we have
\begin{equation}
\sigma _{H}\left( x,t\right) =\Upsilon _{0}T\left( x,t\right) ,\;\;x\in
\left[ 0,1\right] ,\;t>0,  \label{sigmaH}
\end{equation}%
if boundary conditions (\ref{BC-u-bd}) are given by (\ref{5}). Note that $%
\sigma _{H}$ is a locally integrable function with the jump at $t=0$ and
smooth for $t>0$. Also in case when boundary conditions (\ref{BC-u-bd}) are
given by (\ref{6}), we have ($x\in \left[ 0,1\right] $)%
\begin{eqnarray*}
\sigma _{F}\left( x,t\right) &=&\sigma _{H}\left( x,t\right) +\frac{\mathrm{d%
}}{\mathrm{d}t}%
%TCIMACRO{\TeXButton{TeX field}{\big(}}%
%BeginExpansion
\big(%
%EndExpansion
F\left( t\right) \ast T\left( x,t\right)
%TCIMACRO{\TeXButton{TeX field}{\big)}}%
%BeginExpansion
\big)%
%EndExpansion
,\;\;t>0 \\
\sigma _{F}\left( x,t\right) &=&0,\;\;t<0.
\end{eqnarray*}%
This is a smooth function for $t>0.$ Note that $\sigma _{H}$ and $\sigma
_{F} $ are real-valued functions, which can shown similarly as in \S \ref{H}.

\section{Numerical examples for displacement and stress in case of stress
relaxation\label{ne}}

In this section, we give several numerical examples of displacement $u_{H}$
and stress $\sigma _{H},$ given by (\ref{uha}) and (\ref{sigmaH}),
respectively. In figure \ref{fig-2}, we show displacements, determined
according to (\ref{uha}), for two different positions. Parameters in (\ref%
{uha}) are chosen as follows: $\Upsilon _{0}=1,$ $a=0.045,$ $b=0.5.$ The
integration goes to $1000,$ while the number of residues in the sum is $400.$
\begin{figure}[h]
\centering
\includegraphics[scale=0.7]{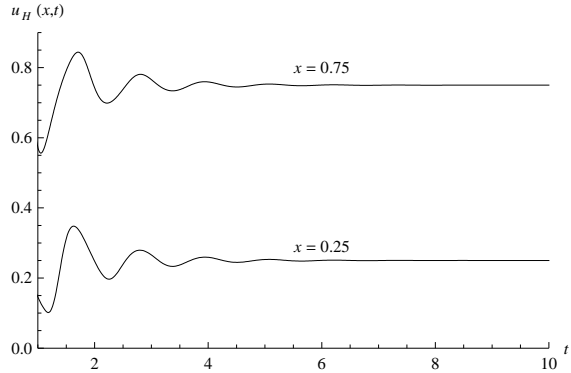}
\caption{Displacements $u_{H}\left( x,t\right) $ in stress relaxation
experiment, as a function of time at $x=0.25,$ $x=0.75$ for $t\in \left(
1,10\right) .$}
\label{fig-2}
\end{figure}

In figure \ref{fig-3}, we show the stresses determined according to (\ref%
{sigmaH}) for the same values of parameters used for figure \ref{fig-2}. In
order to emphasize stress relaxation process, we show stresses only for $%
t\geq 1$.
\begin{figure}[h]
\centering
\includegraphics[scale=0.7]{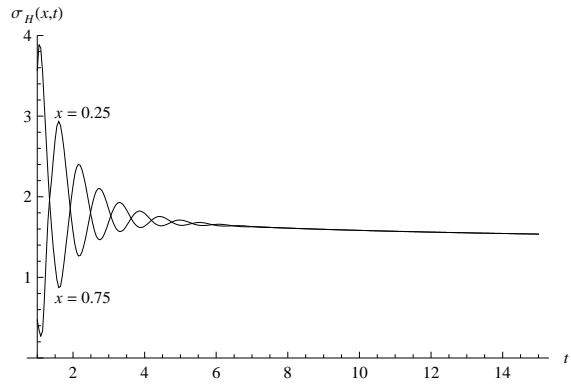}
\caption{Stresses $\protect\sigma _{H}\left( x,t\right) $ in stress
relaxation experiment, as a function of time at $x=0.25,$ $x=0.75$ for $t\in
\left( 1,15\right) .$}
\label{fig-3}
\end{figure}
From figure \ref{fig-3}, it is evident that the stress tends to a constant
value $\Upsilon _{0},$ at each point $x\in \left[ 0,1\right] .$

\section{Conclusion\label{conc}}

In this work we analyze displacements $u$ and stresses $\sigma $ for a
viscoelastic rod of finite length, which satisfy a constitutive equation of
distributed fractional order (\ref{3-1}), (\ref{a-b}). Displacement of a
free end of a rod is assumed to be $\Upsilon =\Upsilon _{0}H$ ($\Upsilon
_{0}>0$) and the displacement $u_{H}$ is obtained in the form (\ref{uha}).
Results for displacements and stresses are shown in figures \ref{fig-2} and %
\ref{fig-3}. Figures show oscillatory character of both stresses and
displacements. Oscillations are damped and for large time displacements show
linear dependence on the distance of a particle from a fixed end, while
stresses are approaching to the limiting value independently of the position
of the particle. For large time stress relaxation curves tend to curves
corresponding to quasistatic analysis (see work by \cite{a-2002} and \cite%
{drozd}). In figures \ref{fig-2} and \ref{fig-3}, we show displacements and
stresses for $t\geq 1$.

Finally, let us comment the choice of parameters in constitutive equation (%
\ref{3-1}), (\ref{a-b}). Our choice $a<b$ in (\ref{a-b}) is a result of the
requirement that entropy inequality is satisfied for each $\alpha \in \left(
0,1\right) $ (see a paper by \cite{b-t}). The constitutive equation (\ref%
{3-1}), (\ref{a-b}) is of a viscoelastic type. A generalization of the
problem treated here would include effects of viscoinertial type. In that
case (\ref{3-1}) would be replaced by
\begin{equation}
\int_{0}^{2}\phi _{1}\left( \alpha \right) {}_{0}D_{t}^{\alpha }\sigma
\left( x,t\right) \mathrm{d}\alpha =E\int_{0}^{2}\phi _{2}\left( \alpha
\right) {}_{0}D_{t}^{\alpha }\mathcal{E}\left( x,t\right) \mathrm{d}\alpha
,\;\;x\in \left[ 0,L\right] ,\;t>0.  \label{visc-inert}
\end{equation}%
For the analysis of a system (\ref{1}), (\ref{3}), (\ref{visc-inert}), one
could use the type of analysis presented in work by \citeasnoun{APZ-1} and %
\citeasnoun{APZ-2}.

\begin{ack}
This research was supported by Ministry of Science projects 144019 (T.M.A.
and D.Z.) and 144016 (S.P.).
\end{ack}

\end{document}